\DeclareSymbolFont{symbols}{OMS}{ptm}{m}{n}
\begin{document}
\renewcommand{\email}[1]{\href{mailto:#1}{\footnotesize\tt #1}}

\title{Productivity of Non-Orthogonal Term Rewrite Systems}
\author{Matthias Raffelsieper
\institute{
Department of Computer Science, TU Eindhoven\\
P.O.\ Box 513, 5600~MB Eindhoven, The Netherlands\\
eMail: \email{M.Raffelsieper@tue.nl}
}
}
\def\titlerunning{Productivity of Non-Orthogonal Term Rewrite Systems}
\def\authorrunning{Matthias Raffelsieper}

\newcommand{\Nats}{\mathbb{N}}
\newcommand{\rt}{\mathrm{root}}
\newcommand{\T}{\mathcal{T}}
\newcommand{\V}{\mathcal{V}}
\newcommand{\C}{\mathcal{C}}
\newcommand{\R}{\mathcal{R}}
\newcommand{\NF}{\mathrm{NF}}
\newcommand{\gnd}{\mathrm{gnd}}
\newcommand{\Spec}{\mathcal{S}}
\newcommand{\ar}{\mathsf{ar}}
\newcommand{\supt}{\trianglerighteq}
\newcommand{\subt}{\trianglelefteq}
\newcommand{\fsym}[1]{\mathsf{#1}}
\newcommand{\zero}{\fsym{0}}
\newcommand{\one}{\fsym{1}}
\newcommand{\Pos}{\mathrm{Pos}}
\newcommand{\PosRedS}{\mathrm{Pos}^{\mathrm{red}_s}}
\newcommand{\blocked}[1][\mu]{\mathrm{blocked}_{#1}}
\newcommand{\parto}{\mathrel{\smash{\overset{\shortparallel}{\rightarrow}}}}
\newcommand{\muto}{\mathrel{\smash{\overset{\mu}{\rightarrow}}}}
\newcommand{\SNinf}{\mathsf{SN}^{\infty}}

\def\clap#1{\hbox to 0pt{\hss#1\hss}}
\def\mathllap{\mathpalette\mathllapinternal}
\def\mathrlap{\mathpalette\mathrlapinternal}
\def\mathclap{\mathpalette\mathclapinternal}
\def\mathllapinternal#1#2{%
\llap{$\mathsurround=0pt#1{#2}$}}
\def\mathrlapinternal#1#2{%
\rlap{$\mathsurround=0pt#1{#2}$}}
\def\mathclapinternal#1#2{%
\clap{$\mathsurround=0pt#1{#2}$}}

\theoremstyle{plain}
\newtheorem{theorem}{Theorem}
\newtheorem{lemma}[theorem]{Lemma}
\newtheorem{proposition}[theorem]{Proposition}

\theoremstyle{definition}
\newtheorem{definition}[theorem]{Definition}
\newtheorem{example}[theorem]{Example}

\maketitle

\begin{abstract}
Productivity is the property that finite prefixes of an infinite constructor
term can be computed using a given term rewrite system. Hitherto, productivity
has only been considered for orthogonal systems, where non-determinism is not
allowed. This paper presents techniques to also prove productivity of
non-orthogonal term rewrite systems.
For such systems, it is desired that one does not have to guess the reduction
steps to perform, instead any outermost-fair reduction should compute
an infinite constructor term in the limit.
As a main result, it is shown that for possibly non-orthogonal term rewrite
systems this kind of productivity can be concluded from context-sensitive
termination.
This result can be applied to prove stabilization of digital circuits, as
will be illustrated by means of an example.
\end{abstract}

\section{Introduction}
\label{sec:Introduction}

Productivity is the property that a given set of computation rules computes a
desired infinite object. This has been studied mostly in the setting of
\emph{streams}, the simplest infinite objects. However, as already observed
in~\cite{ZR10}, productivity is also of interest for other infinite structures,
for example infinite trees, or mixtures of finite and infinite structures.
A prominent example of the latter are lists in the programming language
\textsf{Haskell}~\cite{Haskell98}, which can be finite (by ending with a
sentinel ``\texttt{[]}'') or which can go on forever.

Existing approaches for automatically checking productivity,
e.g.,~\cite{End10,EGH08,ZR10},
are restricted to \emph{orthogonal} systems. The main reason for this
restriction is that it disallows non-determinism. A complete computer program
(i.e., a program and all possible input sequences, neglecting sources of true
randomness) always behaves deterministically, as the steps of computation are
precisely determined.
However, often a complete program is not available, too large to be studied, or
its inputs are provided by the user or they are not specified completely.
In this case, non-determinism can be used to abstract from certain parts by
describing a number of possible behaviors. In such a setting, the restriction to
orthogonal systems, which is even far stronger than only disallowing
non-determinism, should be removed.
An example of such a setting are hardware components, describing streams of
output values which are depending on the streams of input values. To analyze
such components in isolation, all possible input streams have to be considered.

This paper presents an extension of the techniques in~\cite{ZR10} to analyze
productivity of specifications that may contain non-determinism.
As in that work, the main technique to prove productivity is by analyzing
termination of a corresponding context-sensitive term rewrite system~\cite{L98}.
Here however, overlapping rules are allowed and the data TRS is only required to
be terminating, but it need not be confluent nor left-linear.
This technique can be used to prove stabilization of hardware circuits,
which have external inputs whose exact sequence of values is unknown.
Thus, stabilization should be proven for all possible input sequences, which are
therefore abstracted to be random Boolean streams, i.e., arbitrary streams
containing the data values~$\zero$ and~$\one$.

\paragraph{Structure of the Paper.} In Section~\ref{sec:Specifications} we
introduce \emph{proper specifications}, which are the forms of rewrite systems
studied in this paper. After that, in Section~\ref{sec:Productivity}, the
different notions of productivity are discussed. For non-orthogonal
specifications as studied in this paper, there exist both \emph{weak} and
\emph{strong} productivity. We will motivate that strong productivity is the
notion that we are interested in, as it does guarantee a constructor term to
be reached by any outermost-fair reduction.
The theoretical basis is laid in Section~\ref{sec:Criteria},
proving our desired result that termination of a corresponding context-sensitive
TRS implies strong productivity of a proper specification.
Section~\ref{sec:ApplHW} then applies this theory to an example hardware
circuit, checking that for a given circuit the output values always stabilize,
regardless of the sequence of input values.
Finally, Section~\ref{sec:Conclusions} concludes the paper.

\section{Specifications}
\label{sec:Specifications}

A \emph{specification} gives the symbols and rules that shall be used to
compute an intended infinite object. This section gives a brief introduction to
term rewriting, mainly aimed at fixing notation. For an in-depth description of
term rewriting, see for example~\cite{BN98, Terese03}.
All symbols are assumed to have one of two possible \emph{sorts}.
The first sort~$d$ is for \emph{data}. Terms of this sort represent the
elements in an infinite structure, but which are not infinite terms by
themselves. An example for data are the Booleans $\fsym{false}$ and
$\fsym{true}$ (which are also written~$\zero$ and~$\one$),
or the natural numbers represented in Peano form by the two constructors $\zero$
and $\fsym{succ}$. The set of all terms of sort~$d$ is denoted
$\T_d(\Sigma_d, \V_d)$, where $\Sigma_d$ is a set of function symbols all having
types of the form $d^m \to d$ and where $\V_d$ is a set of \emph{variables}
all having sort~$d$.
The second sort is the sort~$s$ for \emph{structure}. Terms of this sort are to
represent the intended structure containing the data and therefore are allowed
to be infinite. The set of all well-typed structure terms is denoted
$\T_s(\Sigma_d \cup \Sigma_s, \V)$, where
$\Sigma_s$ is disjoint from $\Sigma_d$ and contains function symbols having
types of the form $d^m \times s^n \to s$ and where
$\V = \V_d \cup \V_s$ for a set $\V_s$ of variables all having sort~$s$, which
is disjoint from $\V_d$.
We define the set of all well-typed terms as
$\T(\Sigma_d \cup \Sigma_s, \V) = \T_d(\Sigma_d, \V_d) \cup
\T_s(\Sigma_d \cup \Sigma_s, \V)$ and denote
the set of all \emph{ground terms}, i.e., terms not containing any variables,
by $\T(\Sigma_d \cup \Sigma_s) =
\T(\Sigma_d \cup \Sigma_s, \emptyset)$.
A term $t \in \T(\Sigma_d \cup \Sigma_s, \V)$ of sort $\varsigma \in \{ d, s \}$
is either a variable, i.e., $t \in \V_{\varsigma}$,
or $t = f(u_1, \dotsc, u_m, t_1, \dotsc, t_n)$
with $f \in \Sigma_{\varsigma}$ of type $d^m \times s^n \to \varsigma$
(where $n=0$ if $\varsigma=d$),
$u_1, \dotsc, u_m \in \T_d(\Sigma_d, \V_d)$,
and $t_1, \dotsc, t_n \in \T_s(\Sigma_d \cup \Sigma_s, \V)$.
In the latter case, i.e., when $t = f(u_1, \dotsc, u_m, t_1, \dotsc, t_n)$, we
define the \emph{root} of the term~$t$ as $\rt(t) = f$.

A \emph{Term Rewrite System (TRS)} over a signature $\Sigma$ is a collection of
rules $(\ell,r) \in \T(\Sigma,\V)^2$ such that $\ell \notin \V$ and every
variable contained in~$r$ is also contained in~$\ell$. As usual, we write
$\ell \to r$ instead of $(\ell,r)$.
A term $t \in \T(\Sigma,\V)$ \emph{rewrites} to a term $t' \in \T(\Sigma,\V)$
with the rule $\ell \to r \in \R$, denoted $t \to_{\ell \to r,p} t'$
at \emph{position} $p \in \Pos(t)$, if a substitution~$\sigma$
exists such that $t|_p = \ell\sigma$ and $t' = t[r\sigma]_p$.
A position is as usual a sequence of natural numbers that identifies a number of
argument positions taken to reach a certain subterm. The notation $t[r\sigma]_p$
represents the term $t$ in which the subterm at position~$p$, that is denoted
by $t|_p$, has been replaced by the term $r\sigma$. This is the term~$r$ in
which all variables have been replaced according to the substitution $\sigma$,
which is a map from variables to terms.
It is allowed to only indicate the term rewrite system $\R$ instead of
the specific rule $\ell \to r$ or to leave out the subscripts in case they
are irrelevant or clear from the context.
The set of all \emph{normal forms} of a TRS $\R$ over a
signature $\Sigma$ is denoted $\NF(\R)$ and is defined as $\NF(\R) =
\{ t \in \T(\Sigma,\V) \mid \forall t' \in \T(\Sigma,\V): t \not\to_{\R} t' \}$.
The set of \emph{ground normal forms} $\NF_{\gnd}(\R)$ additionally requires
that all contained terms are ground terms, i.e.,
$\NF_{\gnd}(\R) = \NF(\R) \cap \T(\Sigma)$.

We still have to impose some restrictions on specifications to make our
approach work. These restrictions are given below in the definition of
\emph{proper} specifications, which are similar to those of~\cite{ZR10}.

\begin{definition}
\label{def:Spec}
A \emph{proper specification} is a tuple
$\Spec = (\Sigma_d,\Sigma_s,\C,\R_d,\R_s)$,
where $\Sigma_d$ is the signature of data symbols, each of type $d^m \to d$
(then the data arity of such a symbol $g$ is defined to be $\ar_d(g) = m$),
$\Sigma_s$ is the signature of structure symbols~$f$,
which have types of the shape $d^m \times s^n \to s$
(and data arity $\ar_d(f) = m$, structure arity $\ar_s(f) = n$),
$\C \subseteq \Sigma_s$ is a set of \emph{constructors},
$\R_d$ is a terminating TRS over the signature $\Sigma_d$, and
$\R_s$ is a TRS over the signature $\Sigma_d \cup \Sigma_s$,
containing rules $f(u_1, \dotsc, u_m, t_1, \dotsc, t_n) \to t$
that satisfy the following properties:
\begin{itemize}
\item
$f \in \Sigma_s \setminus \C$ with $\ar_d(f) = m$, $\ar_s(f) = n$,

\item
$f(u_1, \dotsc, u_m, t_1, \dotsc, t_n)$ is a well-sorted linear term,

\item
$t$ is a well-sorted term of sort~$s$, and

\item
for all $1 \le i \le n$ and for all $p \in \Pos(t_i)$ such that $t_i|_p$ is not
a variable and $\rt(t_i|_p) \in \Sigma_s$, it holds that
$\rt(t_i|_{p'}) \notin \C$ for all $p' < p$
(i.e., no structure symbol is below a constructor).
\end{itemize}

Furthermore, $\R_s$ is required to be \emph{exhaustive}, meaning that for every
$f \in \Sigma_s \setminus \C$ with $\ar_d(f) = m$, $\ar_s(f) = n$, ground normal
forms $u_1, \dotsc, u_m \in \NF_{\gnd}(\R_d)$, and terms
$t_1, \dotsc, t_n \in \T(\Sigma_d \cup \Sigma_s)$ such that for every
$1 \le i \le n$,
$t_i = c_i(u_1', \dotsc, u_k', t_1', \dotsc, t_l')$ with
$u_j' \in \NF_\gnd(\R_d)$ for $1 \le j \le k = \ar_d(c_i)$ and
$c_i \in \C$, there exists at least one rule $\ell \to r \in \R_s$ such that
$\ell$ matches the term $f(u_1, \dotsc, u_m, t_1, \dotsc, t_n)$.

A proper specification $\Spec$ is called \emph{orthogonal}, if $\R_d \cup \R_s$
is orthogonal, otherwise it is called \emph{non-orthogonal}.
\end{definition}

The above definition coincides with the definition of proper specifications
given in~\cite{ZR10} for orthogonal proper specifications.\footnote{%
To see this, one should observe that a defined symbol cannot occur on a non-root
position of a left-hand side. This holds since otherwise the innermost such
symbol would have variables and constructors as structure arguments and data
arguments that do not unify with any of the data rules (due to orthogonality),
which therefore are normal forms and can be instantiated to ground normal forms.
Thus, exhaustiveness would require a left-hand side to match this term when
instantiating all structure variables with some terms having a constructor root,
which would give a contradiction to non-overlappingness.%
}
We will illustrate the restrictions in the above definition later in
Section~\ref{sec:Criteria}.
In the following, all examples except for Example~\ref{ex:CexpLeftLin}
will be using the domain of Boolean streams,
where $\C = \{ {:} \}$ and $\Sigma_d \supseteq \{ \zero,\one \}$ with
$\ar_d(\zero) = \ar_d(\one) = 0$ and $\ar_d(:) = \ar_s(:) = 1$.
In these examples, only a data TRS~$\R_d$ and a structure TRS~$\R_s$ are given
from which the remaining symbols in~$\Sigma_d$ and~$\Sigma_s$ and their arities
can be derived.
If the data TRS~$\R_d$ is not provided it is assumed to be empty.

\section{Productivity}
\label{sec:Productivity}

For orthogonal proper specifications, productivity is the property that
every ground term $t$ of sort~$s$ can, in the limit, be rewritten to a possibly
infinite term consisting only of constructors. This is equivalent to stating
that for every prefix depth $k \in \Nats$, the term $t$ can be rewritten to
another term $t'$ having only constructor symbols on positions of depth $k$ or
less.

\begin{definition}
\label{def:OrthoProd}
An orthogonal proper specification $\Spec = (\Sigma_d,\Sigma_s,\C,\R_d,\R_s)$ is
\emph{productive},
iff for every ground term $t$ of sort~$s$ and every $k \in \Nats$,
there is a reduction $t \to_{\R_d \cup \R_s}^* t'$ such that every
symbol of sort~$s$ in $t'$ on depth less or equal to~$k$ is a constructor.
\end{definition}

Productivity of an orthogonal proper specification is equivalent to the
following property, as was shown in~\cite{ZR10}.

\begin{proposition}
\label{prop:OrthoProd}
An orthogonal proper specification $\Spec = (\Sigma_d,\Sigma_s,\C,\R_d,\R_s)$ is 
\emph{productive},
iff for every ground term $t$ of sort~$s$ there is a reduction
$t \to_{\R_d \cup \R_s}^* t'$ such that $\rt(t') \in \C$.
\end{proposition}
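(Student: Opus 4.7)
The plan is to dispatch the two directions separately, with the ($\Rightarrow$) implication being immediate from Definition~\ref{def:OrthoProd} by specializing to $k = 0$: in any reduction $t \to^*_{\R_d \cup \R_s} t'$ in which every sort-$s$ symbol at depth~$0$ is a constructor, the root of $t'$ (which has sort~$s$ since $t$ does) must lie in $\C$, so $\rt(t') \in \C$.

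For the converse, I would prove by induction on $k \in \Nats$ the stronger statement that every ground term $t$ of sort~$s$ admits a reduction $t \to^*_{\R_d \cup \R_s} t'$ such that every sort-$s$ symbol in $t'$ at depth $\le k$ is a constructor. The base case $k = 0$ is exactly the hypothesized root-level property. For the inductive step, I would first apply the induction hypothesis to obtain $t \to^*_{\R_d \cup \R_s} t'$ with all sort-$s$ symbols at depth $\le k$ being constructors, and then let $P$ be the finite set of positions of depth exactly $k + 1$ in $t'$ at which the subterm has sort~$s$. Each $t'|_p$ with $p \in P$ is itself a ground term of sort~$s$, so the hypothesis supplies a reduction $t'|_p \to^*_{\R_d \cup \R_s} u_p$ with $\rt(u_p) \in \C$. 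Since the positions in $P$ are pairwise parallel, these reductions can be composed one after another inside $t'$, producing $t' \to^*_{\R_d \cup \R_s} t''$ with $t''|_p = u_p$ for every $p \in P$; the symbols at depth $\le k$ in $t''$ coincide with those in $t'$ and therefore remain constructors, while the sort-$s$ symbols at depth $k + 1$ are now exactly the $\rt(u_p) \in \C$.

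The main thing to verify, though it is hardly an obstacle, is that reductions performed inside the disjoint subterms at depth $k + 1$ neither disturb the constructor symbols already established at depth $\le k$ (this is just closure of rewriting under contexts) nor interfere with one another (this follows from positions in $P$ being pairwise parallel, so that a rewrite at one position leaves the subterms at the others intact and at their original positions). Note that orthogonality plays no role in lifting the root-level condition to arbitrary prefix depths; it is relevant in Definition~\ref{def:OrthoProd} only for the classical characterization of productivity it enables.
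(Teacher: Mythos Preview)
The paper does not actually prove this proposition: it is stated with the remark ``as was shown in~\cite{ZR10}'' and no argument is given. Your direct proof is correct and self-contained. The forward direction is indeed the specialization $k=0$, and your induction on~$k$ for the converse goes through exactly as you describe: the positions in~$P$ are pairwise parallel, so the individual reductions $t'|_p \to^* u_p$ can be carried out sequentially inside~$t'$ without interference, and none of them touches a symbol at depth~$\le k$. Your closing observation is also accurate: orthogonality plays no role in this equivalence between the depth-$k$ formulation and the root-level formulation; it is assumed only because Definition~\ref{def:OrthoProd} is stated for orthogonal specifications.
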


It was already observed in~\cite{EGH09,End10} that productivity of orthogonal
specifications is equivalent to the existence of an \emph{outermost-fair}
reduction computing a constructor prefix for any given depth. Below, we give a
general definition of outermost-fair reductions, as they will
also be used in the non-orthogonal setting.

\pagebreak[2]
\begin{definition}
\label{def:OutFair}
~
\begin{itemize}
\item
A \emph{redex} is a subterm $t|_p$ of a term $t$ at position $p \in \Pos(t)$
such that a rule $\ell \to r$ and a substitution $\sigma$ exist with
$t|_p = \ell\sigma$.
The redex $t|_p$ is said to be \emph{matched} by the rule $\ell \to r$.

\item
A redex is called \emph{outermost} iff it is not a strict subterm of another
redex.

\item
A redex $t|_p = \ell\sigma$ is said to \emph{survive} a reduction step
$t \to_{\ell' \to r', q} t'$ if $p \parallel q$, or if $p < q$ and
$t' = t[\ell\sigma']_p$ for some substitution $\sigma'$
(i.e., the same rule can still be applied at $p$).

\item
A rewrite sequence (reduction) is called \emph{outermost-fair}, iff there is no
outermost redex that survives as an outermost redex infinitely long.

\item
A rewrite sequence (reduction) is called \emph{maximal}, iff it is infinite or
ends in a \emph{normal form} (a term that cannot be rewritten further).
\end{itemize}
\end{definition}

For non-orthogonal proper specifications, requiring just the existence of a
reduction to a normal form (or to a constructor prefix of arbitrary depth)
does not guarantee the computation to reach it, due to the possible
non-deterministic choices. This can be observed for the term $\fsym{maybe}$
in the following example.

\begin{example}
\label{ex:NonOrthoMaybeRnd}
Consider a proper specification with the TRS $\R_s$ consisting of the following
rules:
\[
\begin{array}{rcl@{\qquad\qquad}rcl}
    \fsym{maybe} &\to& \zero : \fsym{maybe}
&
    \fsym{random} &\to& \zero : \fsym{random}
\\
    \fsym{maybe} &\to& \phantom{\zero:\fsym{maybe}}\mathllap{\fsym{maybe}}
&
    \fsym{random} &\to& \one : \fsym{random}
\end{array}
\]

This specification is not orthogonal, since the rules for $\fsym{maybe}$ as well
as those for $\fsym{random}$ overlap.
We do not want to call this specification productive, since it admits the
infinite outermost-fair reduction $\fsym{maybe} \to \fsym{maybe} \to \ldots$
that never produces any constructors. However, there exists an infinite
reduction producing infinitely many constructors starting in the term
$\fsym{maybe}$, namely
$\fsym{maybe} \to \zero : \fsym{maybe}
\to \zero : \zero : \fsym{maybe} \to \ldots$.
When only considering the rules for $\fsym{random}$ then we want to call
the resulting specification productive, since no matter what rule of
$\fsym{random}$ we choose, an element of the stream is created.
\end{example}

Requiring just the existence of a constructor normal form is called
\emph{weak productivity} in~\cite{EGH09,End10}. We already stated above that
this is not the notion of productivity we are interested in. The one we are
interested in is \emph{strong productivity}, which is also defined
in~\cite{EGH09,End10}, since it requires all reductions that make progress on
outermost positions to reach constructor normal forms.

\begin{definition}
\label{def:StrongProd}
A proper specification $\Spec$ is called \emph{strongly productive} iff for
every ground term $t$ of sort~$s$ all maximal outermost-fair rewrite sequences
starting in $t$ end in (i.e., have as limit for infinite sequences)
a constructor normal form.
\end{definition}

It was observed in~\cite{EGH09,End10} that weak and strong productivity coincide
for orthogonal (proper) specifications.
However, for non-orthogonal (proper) specifications this is not the case
anymore.
The rules for $\fsym{maybe}$ in Example~\ref{ex:NonOrthoMaybeRnd} are not
strongly productive, since they allow the infinite outermost-fair reduction
$\fsym{maybe} \to \fsym{maybe} \to \ldots$. However, these rules are weakly
productive, since any ground term can be rewritten to an infinite stream
containing only $\zero$~elements after some finite prefix.
For example, the ground term $\one : \fsym{maybe}$ can be rewritten to the
infinite stream $\one : \zero : \zero : \ldots$.

An example of a non-orthogonal proper specification that is both strongly and
weakly productive are the rules for $\fsym{random}$ in
Example~\ref{ex:NonOrthoMaybeRnd}, which always produce an infinite stream.
In this case, the restriction to outermost-fair reductions is not needed.
However, if we add the rule $\fsym{id}(xs) \to xs$ and replace the
rule $\fsym{random} \to \one : \fsym{random}$ by the rule
$\fsym{random} \to \fsym{id}(\one : \fsym{random})$, then
the infinite reduction
$\fsym{random} \to \fsym{id}(\one : \fsym{random}) \to
\fsym{id}(\one : \fsym{id}(\one : \fsym{random})) \to \ldots$ exists. This
reduction is not outermost-fair since the outermost redex $\fsym{id}(\ldots)$
survives infinitely often.
When restricting to outermost-fair reductions, then indeed an infinite stream of
Boolean values is obtained for every such reduction, so this is a strongly
productive proper specification, too.
Note that strong productivity implies weak productivity, so the example is also
weakly productive.

\section{Criteria for Strong Productivity}
\label{sec:Criteria}

For orthogonal proper specifications, it is sufficient to just consider
reductions that create a constructor at the top, as stated in
Proposition~\ref{prop:OrthoProd}. We will show next that this is also the case
for non-orthogonal proper specifications. However, in contrast to~\cite{ZR10},
here we have to consider all maximal outermost-fair reductions, instead of just
requiring the existence of such a reduction.

\begin{proposition}
\label{prop:NonOrthoProd}
A proper specification $\Spec = (\Sigma_d,\Sigma_s,\C,\R_d,\R_s)$ is
\emph{strongly productive} iff for every maximal outermost-fair reduction
$t_0 \to_{\R_d \cup \R_s} t_1 \to_{\R_d \cup \R_s} \dots$
with $t_0$ being of sort~$s$ there exists $k \in \Nats$ such that
$\rt(t_k) \in \C$.
\end{proposition}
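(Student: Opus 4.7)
The plan is to prove each direction separately. The forward direction is essentially bookkeeping; the backward direction is the substantive half, and I would prove it by induction on the depth to which a constructor prefix has been produced, using a projection of the global reduction onto subterms below a growing constructor stub.

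For the forward direction, I would first observe that, by exhaustiveness in Definition~\ref{def:Spec}, any ground sort-$s$ normal form must be constructor-rooted: if it had a defined root $f$, then each sort-$s$ argument would itself be constructor-rooted (by the same reasoning applied to subterms) and each data argument would be a ground $\R_d$-normal form, yielding an applicable $\R_s$-rule and contradicting normality. Consequently, if a maximal outermost-fair reduction $t_0 \to t_1 \to \dots$ has a constructor normal form as its limit (or terminal term), then the limit is constructor-rooted; since the intended notion of limit forces the root symbol to stabilize at some finite index, some $t_k$ satisfies $\rt(t_k) \in \C$.

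For the backward direction, I would induct on a depth parameter $d$, establishing that for every maximal outermost-fair reduction from a ground sort-$s$ term there is a step after which every sort-$s$ symbol of depth $\le d$ is a constructor and remains one. The base case $d=0$ is the hypothesis of the proposition together with the fact from Definition~\ref{def:Spec} that rule left-hand sides never have a constructor root, so once $\rt(t_k) \in \C$ it persists. In the inductive step I fix a step $j$ after which depth $d$ is stabilized, pick any sort-$s$ position $p$ of depth $d+1$ (there are only finitely many), and form a \emph{projected reduction} $\rho_p$ on $t_j|_p$ by retaining exactly the global steps that occur at positions extending $p$. Granted that $\rho_p$ is maximal outermost-fair, applying the hypothesis of the proposition to $\rho_p$ (which starts from a ground sort-$s$ term) yields an index at which the projected subterm has a constructor root; this constructor then persists in the global reduction, and taking the maximum over the finitely many sort-$s$ positions at depth $d+1$ closes the induction.

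The main obstacle is verifying that $\rho_p$ is maximal outermost-fair. The enabling observation is that, from step $j$ onward, no proper prefix of $p$ is or becomes a redex: the sort-$s$ ancestors of $p$ carry constructors by the induction hypothesis, and the data arguments sitting between consecutive sort-$s$ ancestors lie at positions parallel to $p$. This permits a two-way transfer: a redex of $t_j|_p$ at position $q$ corresponds to a redex of $t_j$ at position $p\cdot q$, outermost in the subterm iff outermost in the global term, while any global step at a position parallel to $p$ preserves all redexes lying under $p$. Two contradictions then finish the argument. If $\rho_p$ were finite and ended at a non-normal-form, the lifted outermost redex would survive all further global steps (which are then all parallel to $p$) as an outermost redex of the global term, violating outermost-fairness of the original reduction. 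If $\rho_p$ were infinite but not outermost-fair, the lifted offending outermost redex would likewise survive as outermost throughout the tail of the global reduction, violating the same fairness. The careful accounting in these lifting steps, and the use of Definition~\ref{def:Spec} to exclude redexes at constructor positions, is where the main technical content sits.
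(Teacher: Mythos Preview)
Your proof is correct and follows essentially the same approach as the paper: both establish the backward direction by induction on the depth of the constructor prefix, projecting the global maximal outermost-fair reduction onto subterms beneath that prefix and arguing that these projections are again maximal outermost-fair (so that the hypothesis can be reapplied). The only cosmetic difference is organizational---the paper first applies the proposition's hypothesis at the root and then invokes the induction hypothesis on the projections onto the immediate structure children, whereas you first invoke the induction hypothesis on the global reduction to obtain the depth-$d$ stub and then apply the proposition's hypothesis to each depth-$(d{+}1)$ projection---but the key lifting argument (that an outermost redex surviving in a projection lifts to one surviving in the global reduction, using that all proper sort-$s$ prefixes are constructor-rooted) is identical.
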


\begin{proof}
The ``only if''-direction is trivial. For the ``if''-direction, we show
inductively that for every depth $z \in \Nats$ and every maximal outermost-fair
reduction $\rho \equiv t_0 \to_{p_0} t_1 \to_{p_1} \dots$ there exists an index
$j \in \Nats$ such that for all positions $p \in \Pos(t_j)$ of sort~$s$ with
$|p| < z$, $\rt(t_j|_p) \in \C$.

For $z = 0$, the index $j$ can be set to $0$, thus here the claim trivially
holds.
Otherwise, we get that an index $k \in \Nats$ exists such that
$\rt(t_k) \in \C$. Let $t_k = c(u_1', \dotsc, u_m', t_1', \dotsc, t_n')$ with
$c \in \C$.
Because $c$ is a constructor, we know that $p_l > \epsilon$ for all $l \ge k$.
Define $P_r = \{ p_i' \mid p_i = (m{+}r) . p_i' \}$ for $1 \le r \le n$ (i.e.,
the positions in the maximal outermost-fair reduction that are occurring in
structure argument~$r$).
Then, for $1 \le r \le n$ and $P_r = \{ p_0^r, p_1^r, \dotsc \}$ the
reduction $t_r' = t_{r,0} \to_{p_0^r} t_{r,1} \to_{p_1^r} \dots$ is also a
maximal outermost-fair reduction, otherwise an infinitely long surviving
outermost redex would also be an infinitely long surviving outermost redex of
the reduction~$\rho$. By the induction hypothesis for $z-1$ we get
that indices $j_r$ for $1 \le r \le n$ exist such that
$\rt(t_{r,j_r}|_p) \in \C$ for all positions $p \in \Pos(t_{r,j_r})$ with
$|p| < z-1$. Since all these reductions were taken from the original reduction,
we define $j = k + \#\text{d-red} + \sum_{i=1}^n j_i$,
where $\#\text{d-red}$ denotes the number of reductions performed in the
data arguments of the constructor~$c$ such that $p_j = p_{j_r}^r$ for the
last~$r$.
This shows that the initial reduction~$\rho$ has the form
$t_0 \to^* t_k = c(u_1', \dotsc, u_m', t_1', \dotsc, t_n') \to^*
c(u_1'', \dotsc, u_m'', t_1'', \dotsc, t_n'') = t_{j+1}$,
where $t_{r,j_r} \to^* t_r''$ for every $1 \le r \le n$. Since there are only
constructors in $t_{r,j_r}$ for depths $0, \dotsc, z-2$, these constructors are
still present in $t_r''$. This proves the proposition, since $c \in \C$ and
thus for all positions $p \in \Pos(t_j)$ of sort~$s$ with $|p| < z$ we have
$\rt(t_j|_p) \in \C$.
\end{proof}

This characterization of strong productivity will be used in the remainder of
the paper. Note that it is similar to the requirements for infinitary strong
normalization $\SNinf$ observed in~\cite{Z08}, where it is found that for
left-linear and finite term rewrite systems, $\SNinf$ holds if and only if every
infinite reduction only contains a finite number of root steps. Thus, it could
seem possible to define strong productivity of proper specifications by
requiring that every reduction starting in a finite ground term is infinitary
strongly normalizing, i.e., $\SNinf$ holds for the relation
${\to_{\R_d \cup \R_s}} \cap \T(\Sigma_d \cup \Sigma_s)^2$.
However, this is not the case, as the following example shows.

\begin{example}
\label{ex:CexpSNinf}
Consider the proper specification containing the following TRS $\R_s$:
\[
\begin{array}{rcl@{\qquad\qquad}rcl}
\fsym{a} &\to& \fsym{f}(\fsym{a})
&
\fsym{f}(x : xs) &\to& x : xs
\end{array}
\]

This TRS has the property $\SNinf$, intuitively because either the
symbol~$\fsym{f}$ remains at the root position and can never be rewritten again
(in case the first rule is applied), or the constructor~$:$ is created at the
root. Formally, this can for example be proven by the technique presented
in~\cite{Z08}: Let $\Sigma_{\#} = \Sigma \uplus \{ g_{\#} \mid g \in \Sigma\}$,
where $\Sigma = \{ \zero, \one, {:}, \fsym{a}, \fsym{f} \}$ is the signature of
the specification.
Then we choose the finite weakly monotone $\Sigma_{\#}$ algebra
$(\{0,1,2\}, [\cdot], \bot, >, \ge)$, where $\bot = 0$, $[\zero] = 0$,
$[\one] = 0$,
$[\fsym{a}] = 1$, $[\fsym{f}](n) = n$, $[:](m,n) = \min \{m+n, 2\}$,
$[a_{\#}] = 2$, $[f_{\#}](n) = 1$, and $[:_{\#}](m,n) = 0$ for
$m,n \in \{0,1,2\}$ and~$>$ and~$\ge$ are the natural comparison operators on
the numbers $\{0,1,2\}$.
It is easy to check that this algebra is indeed weakly monotone (i.e., that
$>$~is well-founded, ${>} \cdot {\ge} \subseteq {>} \subseteq {\ge}$, and
for every $g \in \Sigma_{\#}$, the operation $[g]$ is monotone with respect
to~$\ge$). Additionally, the requirements of the combination
of~\cite[Theorem~5 and Theorem~6]{Z08} are satisfied, i.e., $\{0,1,2\}$ is
finite, $\ge$~is transitive, $a \ge b$ implies $a > b$ or $a = b$,
$a \ge \bot = 0$ for all $a,b \in \{0,1,2\}$, and
$[\ell\,\sigma] \ge [r\,\sigma]$ and $[\ell_{\#}\,\sigma] > [r_{\#}\,\sigma]$
for all $\ell \to r \in \R_s$ and all substitutions $\sigma$, where
$g(t_1, \dotsc, t_k)_{\#} = g_{\#}(t_1, \dotsc, t_k)$.
This proves $\SNinf$ of $\to_{\R_s}$, which especially entails $\SNinf$ of the
relation ${\to_{\R_s}} \cap \T(\Sigma_d \cup \Sigma_s)^2$.

However, the above proper specification is not strongly productive, since the
infinite outermost-fair reduction
$\fsym{a} \to_{\R_s} \fsym{f}(\fsym{a}) \to_{\R_s} \fsym{f}(\fsym{f}(\fsym{a}))
\to_{\R_s} \ldots$,
continued by repeatedly reducing the symbol $\fsym{a}$,
never produces any constructors.
\end{example}

The above example shows that even though we require exhaustiveness of proper
specifications, this exhaustiveness only refers to constructor terms, i.e., the
objects we are interested in, and not to arbitrary terms. A similar observation,
namely that top termination is not equivalent to productivity, was already made
in~\cite{ZR09}.

A first technique to prove strong productivity of proper specifications is given
next. It is a simple syntactic check that determines whether every right-hand
side of sort~$s$ starts with a constructor. For orthogonal proper
specifications, this was already observed in~\cite{ZR10}. It has to be proven
again since here we consider strong productivity, which requires all possible
outermost-fair reductions to reach a constructor normal form, instead of weak
productivity as in~\cite{ZR10}, for which only a single reduction to a
constructor normal form needs to be constructed.

\begin{theorem}
\label{thm:SyntacticCheck}
Let $\Spec = (\Sigma_d,\Sigma_s,\C,\R_d,\R_s)$ be a proper specification.
If for all rules $\ell \to r \in \R_s$ we have $\rt(r) \in \C$, then $\Spec$ is
strongly productive.
\end{theorem}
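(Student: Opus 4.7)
The plan is to invoke Proposition~\ref{prop:NonOrthoProd}: it suffices to show that every maximal outermost-fair reduction $\rho\colon t_0 \to t_1 \to \cdots$ starting from a sort-$s$ ground term $t_0$ eventually produces a term with constructor root. Two immediate consequences of the hypothesis will be used throughout: (i)~any root reduction step applies a rule $\ell \to r \in \R_s$ whose right-hand side starts with a constructor, so it yields a constructor at the root at once; and (ii)~since no left-hand side has constructor root, a term whose root is in $\C$ can never be root-rewritten, so constructor roots are permanent.

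As a preliminary step I would prove, by well-founded induction on term size, that every ground normal form of $\R_d \cup \R_s$ of sort~$s$ has constructor root. Indeed, if $t = f(u_1,\dotsc,u_m,t_1,\dotsc,t_n)$ is such a normal form with $f \in \Sigma_s \setminus \C$, then each $u_i$ is a sort-$d$ subterm of~$t$ and hence lies in $\NF_{\gnd}(\R_d)$, while by induction hypothesis each $t_j$ is constructor-rooted with data sub-arguments also in $\NF_{\gnd}(\R_d)$ (again as subterms of~$t$). Exhaustiveness of $\R_s$ then produces a rule matching~$t$ at the root, contradicting $t \in \NF(\R_d \cup \R_s)$.

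With this lemma the finite case of $\rho$ is immediate. For the infinite case I would argue by contradiction: suppose no $t_k$ has constructor root. Then (i) prohibits any root step, so every rewrite in $\rho$ is strictly below the root and $\rt(t_i) = \rt(t_0) = f \notin \C$ for all~$i$. I then proceed by strong induction on the size of $t_0$ applied to its sort-$s$ subterms: granted that the sub-reductions of $\rho$ restricted to positions under each structure argument are themselves maximal and outermost-fair, the induction hypothesis yields that each structure argument eventually becomes constructor-rooted and, by~(ii), stays so; termination of $\R_d$ together with outermost-fairness similarly forces each data argument to settle in an $\R_d$-normal form. From some step $k^*$ onwards, $t_{k^*}$ then has all data arguments in $\NF_{\gnd}(\R_d)$ and all structure arguments constructor-headed with data sub-arguments in $\NF_{\gnd}(\R_d)$, so exhaustiveness supplies a rule matching $t_{k^*}$ at the root. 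This root redex is outermost, and since the matched subterms beneath it stay fixed, it survives as an outermost redex for all later steps, contradicting the outermost-fairness of~$\rho$.

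The main obstacle lies in justifying the proviso of the previous paragraph---that the sub-reductions of the structure arguments (and the analogous restriction to the data arguments) inherit outermost-fairness from $\rho$. Since the root symbol $f$ may itself be a redex in some $t_i$, an argument-internal redex that is outermost within the sub-reduction need not be outermost in $\rho$ at that step, so outermost-fairness of $\rho$ does not obviously transfer. The needed observation is that, under the standing assumption that no root step is ever performed, any such ``hiding'' root redex must itself be lost in finite time by the outermost-fairness of $\rho$, so that it cannot permanently block the argument-internal outermost redexes; carefully tracking this alternation between matched and unmatched root states, and ruling out pathological oscillations, is where the bulk of the work lies.
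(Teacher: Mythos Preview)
Your outline coincides with the paper's proof: structural induction on $t_0$, assume for contradiction that no step of $\rho$ is at the root, apply the induction hypothesis to the sub-reductions in the structure arguments together with termination of $\R_d$ for the data arguments to reach the shape required by exhaustiveness, and obtain an outermost root redex that survives forever. The paper does not separate out the finite case with your normal-form lemma; it is absorbed into the same contradiction argument.

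The obstacle you isolate---that outermost-fairness of $\rho$ need not transfer to the argument sub-reductions when the root itself is a redex---is treated no more carefully in the paper: there one simply reads ``As in the proof of Proposition~\ref{prop:NonOrthoProd}'', even though in that proposition the root is a constructor (hence never a redex) while here it is a defined symbol. So your concern is legitimate and you are not missing anything the paper supplies. One way to close the gap uses the hypothesis directly: any $\R_s$-step turns the root of its contractum into a constructor, which is then permanent, and $\R_d$ is terminating; it follows that along $\rho$ the root symbol at each of the finitely many non-variable positions of any fixed left-hand side eventually stabilises, hence so does the predicate ``the root of $t_j$ is a redex''. If it stabilises to \emph{true}, the resulting outermost root redex survives every later step (all the symbols governing the match are now fixed), contradicting outermost-fairness of $\rho$; if it stabilises to \emph{false}, then from that point on argument-outermost redexes are globally outermost, and both maximality and outermost-fairness of the sub-reductions follow, so the induction hypothesis applies.
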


\begin{proof}
Let $\rho \equiv t_0 \to_{p_0} t_1 \to_{p_1} \dots$ be a maximal outermost-fair
reduction and let $t_0 = f(u_1', \dotsc, u_m', t_1', \dotsc, t_n')$.
If $f \in \C$ we are done, so we assume $f \in \Sigma_s \setminus \C$ and
perform structural induction on $t_0$ to prove that $\rt(t_k) \in \C$ for some
$k \in \Nats$.

From the induction hypothesis we get that for every $1 \le i \le n$ and every
maximal outermost-fair reduction $t_i' = t_{i,0} \to t_{i,1} \to \dots$ there
exists an index $k_i \in \Nats$ such that $\rt(t_{i,k_i}) \in \C$.

Assume that for all $j \in \Nats$, $p_j \ne \epsilon$.
As in the proof of Proposition~\ref{prop:NonOrthoProd}, we therefore again
obtain maximal outermost-fair reductions $t_i' \to \dots$, thus we get indices
$k_i \in \Nats$ such that $\rt(t_{i,k_i}) \in \C$, as explained above. This
makes our reduction $\rho$ have the shape
$t_0 = f(u_1', \dotsc, u_m', t_1', \dotsc, t_n') \to^*
f(u_1'', \dotsc, u_m'', t_1'', \dotsc, t_n'') = t_j$ for some $j \in \Nats$,
where $u_1'', \dotsc, u_m'' \in \NF_\gnd(\R_d)$ (since the reduction $\rho$ is
maximal outermost-fair and $\R_d$ is terminating) and
$t_{i,k_i} \to^* t_i''$, thus also $\rt(t_i'') \in \C$. Because $\R_s$ is
exhaustive, we get that $t_j$ contains a redex at the root position~$\epsilon$,
which of course is outermost. This gives rise to a contradiction to $\rho$ being
outermost fair, as this outermost redex survives infinitely often, because
$p_j \ne \epsilon$ for all $j \in \Nats$. Therefore, $p_j = \epsilon$ for some
$j \in \Nats$ and the reduction has the shape $t_0 \to^* t_j \to_{\epsilon}
r\sigma$, where the last step is with respect to some rule
$\ell \to r \in \R_s$.
By the assumption on the shape of the rules in $\R_s$, we have $\rt(r) \in \C$,
hence also $\rt(r\sigma) \in \C$, which proves productivity according to
Proposition~\ref{prop:NonOrthoProd}.
\end{proof}

This technique is sufficient to prove strong productivity of the proper
specification consisting of the two rules for $\fsym{random}$ in
Example~\ref{ex:NonOrthoMaybeRnd}, since both have right-hand sides with the
constructor~$:$ at the root. However, it is easy to create examples which are
strongly productive, but do not satisfy the syntactic requirements of
Theorem~\ref{thm:SyntacticCheck}.

\begin{example}
\label{ex:NonSyntax}
Consider the proper specification with the following TRS $\R_s$:
\[
\begin{array}{rcl@{\qquad}rcl}
    \fsym{ones} &\to& \one : \fsym{ones}
&
    \fsym{finZeroes} &\to& \zero : \fsym{ones}
\\
    \fsym{finZeroes} &\to& \zero : \zero : \fsym{ones}
&
    \fsym{finZeroes} &\to& \zero : \zero : \zero : \fsym{ones}
\\
    \fsym{f}(\zero:xs) &\to& \fsym{f}(xs)
&
    \fsym{f}(\one:xs) &\to& \one : \fsym{f}(xs)
\end{array}
\]

The constant $\fsym{finZeroes}$ produces non-deterministically a stream that
starts with one, two, or three zeroes followed by an infinite stream of ones.
Function~$\fsym{f}$ takes a binary stream as argument and filters out all
occurrences of zeroes. Thus, productivity of this example proves that only a
finite number of zeroes can be produced. This however cannot be proven with the
technique of Theorem~\ref{thm:SyntacticCheck}, since the right-hand side of the
rule $\fsym{f}(\zero:xs) \to \fsym{f}(xs)$ does not start with the
constructor~$:$.
\end{example}

Another technique presented in~\cite{ZR10} to show productivity of orthogonal
proper specifications is based on context-sensitive termination~\cite{L98}.
The idea is to disallow rewriting in structure arguments of constructors, thus
context-sensitive termination implies that for every ground term of sort~$s$,
a term starting with a constructor can be reached (due to the exhaustiveness
requirement). As was observed by Endrullis and Hendriks recently in~\cite{EH11},
this set of blocked positions can be enlarged, making the approach even
stronger.

Below,
the technique for proving productivity by showing termination of a corresponding
context-sensitive TRS is extended
to also be applicable in the case of our more general proper specifications.
This version already includes an adaption of the improvement mentioned above.

\begin{definition}
\label{def:Mu}
Let $\Spec = (\Sigma_d,\Sigma_s,\C,\R_d,\R_s)$ be a proper specification.
The replacement map $\mu_\Spec : \Sigma_d \cup \Sigma_s \to 2^{\Nats}$
is defined as follows:
\footnote{
Note that in~\cite{EH11}, Endrullis and Hendriks consider orthogonal TRSs and
also block arguments of symbols in $\Sigma_d$ which only contain variables.
This however is problematic when allowing data rules that are not left-linear.
Example:
\[
\begin{array}{l@{\qquad}rcl@{\qquad\qquad}rcl@{\qquad\qquad}rcl}
\R_s:&
\fsym{f}(\fsym{1}) &\to&\fsym{f}(\fsym{d}(\fsym{0},\fsym{d}(\fsym{1},\fsym{0})))
&
\fsym{f}(\fsym{0}) &\to&  \fsym{0} : \fsym{f}(\fsym{0})
\\
\R_d:&
\fsym{d}(x,x) &\to& \fsym{1}
&
\fsym{d}(\fsym{0},x) &\to& \fsym{0}
&
\fsym{d}(\fsym{1},x) &\to& \fsym{0}
\end{array}
\]
Here, the term $\fsym{f}(\fsym{d}(\fsym{0},\fsym{d}(\fsym{1},\fsym{0})))$ can
only be $\mu$-rewritten to the term $\fsym{f}(\fsym{0})$ (which then in turn has
to be rewritten to $\fsym{0} : \fsym{f}(\fsym{0})$)
if defining $\mu(\fsym{d}) = \{ 1 \}$, since the subterm
$\fsym{d}(\fsym{1},\fsym{0})$ can never be rewritten to $\fsym{0}$. However, the
example is not strongly productive, as reducing in this way gives rise to an
infinite outermost-fair reduction
$
\fsym{f}(\fsym{d}(\fsym{0},\fsym{d}(\fsym{1},\fsym{0})))
\to
\fsym{f}(\fsym{d}(\fsym{0},\fsym{0}))
\to
\fsym{f}(\fsym{1})
\to
\dots
$.
Blocking arguments of data symbols can only be done when $\R_d$ is
left-linear.
}
\begin{itemize}
\item
$\mu_\Spec(f) = \{ 1, \dotsc, \ar_d(f) \}$, if $f \in \Sigma_d \cup \C$

\item
$\mu_\Spec(f) = \{ 1, \dotsc, \ar_d(f) + \ar_s(f) \}
    \setminus
    \{
        1 \le i \le \ar_d(f) + \ar_s(f)
        \mid
        t|_i$ is a variable
        for all $\ell \to r \in \R_s$
        and all non-variable subterms $t$ of $\ell$ with $\rt(t) = f
    \}$,\footnote{
        The requirement of $t$ not being a variable ensures that
        $\rt(t)$ is defined.
    }
    otherwise
\end{itemize}
\end{definition}

In the remainder, we leave out the subscript $\Spec$ if the specification is
clear from the context. The replacement map $\mu$ is used to define
the set of \emph{allowed} positions of a non-variable term $t$ as
$\Pos_\mu(t) = \{ \epsilon \} \cup
    \{ i.p \mid i \in \mu(\rt(t)), p \in \Pos_\mu(t|_i) \}$
and the set of \emph{blocked} positions of $t$ as
$\blocked(t) = \Pos(t) \setminus \Pos_\mu(t)$.
Context-sensitive rewriting~\cite{L98} then
is the restriction of the rewrite relation to those redexes on positions
from $\Pos_\mu$. Formally, we have $t \muto_{\ell \to r,p} t'$ iff
$t \to_{\ell \to r,p} t'$ and $p \in \Pos_\mu(t)$ and we say a TRS $\R$ is
\emph{$\mu$-terminating} iff no infinite $\muto_{\R}$-chain exists.

The replacement map $\mu_{\Spec}$ is \emph{canonical}~\cite{L02} for the
left-linear TRS $\R_s$, guaranteeing through the second condition of the above
Definition~\ref{def:Mu} that non-variable positions of left-hand
sides are allowed.
In that definition, the replacement map $\mu_{\Spec}$ is extended to the
possibly non-left-linear TRS $\R_d \cup \R_s$ by allowing all arguments of
symbols from~$\Sigma_d$.

Our main result of this paper is that also for possibly non-orthogonal proper
specifications, $\mu$-termination implies productivity.

\begin{theorem}
\label{thm:CStermination}
A proper specification $\Spec = (\Sigma_d,\: \Sigma_s,\: \C,\: \R_d,\: \R_s)$ is
strongly productive, if $\R_d \cup \R_s$ is $\mu_{\Spec}$-terminating.
\end{theorem}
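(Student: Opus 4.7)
I would prove the contrapositive. By Proposition~\ref{prop:NonOrthoProd}, it suffices to show that for every ground term $t_0$ of sort~$s$ and every maximal outermost-fair reduction $\rho \equiv t_0 \to t_1 \to \cdots$, some $t_k$ satisfies $\rt(t_k) \in \C$. Assuming for contradiction that $\rt(t_i) \notin \C$ for every~$i$, the strategy is to extract from~$\rho$ an infinite $\muto_{\R_d \cup \R_s}$-reduction, contradicting $\mu_\Spec$-termination.

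Two preliminary ingredients are required. First, a key lemma: every ground $\mu$-normal term $t$ of sort~$s$ satisfies $\rt(t) \in \C$. Suppose $\rt(t) = f \in \Sigma_s \setminus \C$; by canonicity of $\mu_\Spec$ (non-variable positions of left-hand sides of~$\R_s$ are $\mu$-allowed, and all positions of data symbols are $\mu$-allowed), the $\mu$-allowed arguments of~$f$ are $\mu$-normal, hence $\R_d$-normal for data arguments and, by structural induction, constructor-rooted with $\R_d$-normal data below for structure arguments. By the definition of $\mu_\Spec$, the $\mu$-blocked arguments of~$f$ are matched by variables in every left-hand side with root~$f$, so one may replace them by canonical ground terms fulfilling the premises of exhaustiveness; the matching left-hand side produced by exhaustiveness has variables at the blocked positions, so it also matches~$t$ itself, yielding a root redex and contradicting $\mu$-normality. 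Second, a persistence observation: a step $t \to_q t'$ with $q \notin \Pos_\mu(t)$ leaves the $\mu$-shape intact, that is, $\Pos_\mu(t) = \Pos_\mu(t')$ and $t|_p = t'|_p$ for every $p \in \Pos_\mu(t)$; this follows from prefix-closure of $\Pos_\mu$, which forces~$q$ to be parallel to every $\mu$-allowed position of~$t$.

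Using these, $\rho$ must be infinite (any normal form would be $\mu$-normal and have root in~$\C$), and each $t_i$ admits an outermost redex at a $\mu$-allowed position (it is not $\mu$-normal by the lemma, so has some $\mu$-redex, and any outer redex is at a $\mu$-position by prefix-closure). If $\rho$ contained only finitely many $\mu$-steps, then from some index~$N$ on every step would be blocked, and the persistence observation would make the outermost $\mu$-redex of~$t_N$ survive as outermost forever---its left-hand side pattern lies at $\mu$-positions and is therefore preserved, and no new outer redex at a strictly smaller $\mu$-position can appear---contradicting outermost-fairness. Hence $\rho$ contains infinitely many $\mu$-steps, which I would chain into an infinite $\muto$-reduction starting at~$t_0$ by iteratively replaying each $\mu$-step on the simulated term, using canonicity to guarantee that the pattern match required for the replay lives entirely at $\mu$-positions and is therefore preserved through the intervening blocked steps of~$\rho$. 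The main obstacle is this replay: substitutions at variables sitting in $\mu$-blocked positions of left-hand sides may diverge between the simulation and~$\rho$, so the simulated term may drift from~$\rho$ in its blocked subterms, and one must verify that this drift does not disturb the $\mu$-allowed matching required by subsequent $\mu$-steps. The resulting infinite $\muto$-reduction from~$t_0$ contradicts $\mu_\Spec$-termination of $\R_d \cup \R_s$, completing the proof.
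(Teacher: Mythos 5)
Your overall route is the same as the paper's: argue via Proposition~\ref{prop:NonOrthoProd}, show a never-constructor-rooted maximal outermost-fair reduction must be infinite and must contain infinitely many steps at $\mu$-allowed positions, and turn these into an infinite $\muto$-reduction contradicting $\mu_\Spec$-termination. Your first ingredient is exactly the paper's Lemma~\ref{lem:ExistRedex} (in contrapositive form), with essentially the same proof. However, there are two concrete problems. First, your ``persistence observation'' is false as stated: prefix-closure of $\Pos_\mu$ does \emph{not} force a blocked position $q$ to be parallel to every allowed position --- allowed positions strictly above $q$ always exist (the root, for one), and for such $p<q$ one has $t|_p\neq t'|_p$. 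What is true, and what you actually need, is that a step at a blocked position preserves the set $\Pos_\mu$ and the function symbols at all allowed positions, and that therefore (using left-linearity of $\R_s$, canonicity of $\mu_\Spec$, and the fact that data subterms lie entirely at allowed positions) it neither creates nor destroys redexes at allowed positions. This strengthened statement is not a cosmetic repair: creation of allowed redexes by blocked steps is precisely what happens in Example~\ref{ex:CexpLeftLin} when left-linearity is dropped, so the properness conditions must enter here explicitly.

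Second, and more seriously, the step you yourself flag as ``the main obstacle'' --- replaying the allowed steps of $\rho$ on a simulated term --- is left open, and your worry about drift is well founded: a variable occupying a blocked argument position of a left-hand side may reoccur at an \emph{allowed} position of the right-hand side, so the invariant ``the simulated term agrees with $\rho$ on all allowed positions'' is not maintained by a replayed step, and your sketch gives no way to control this. The paper closes exactly this hole with Lemma~\ref{lem:SpecialPML}, a specialized Parallel Moves Lemma: because every blocked redex sits below a variable of the (linear) left-hand side --- here one also uses the requirement of Definition~\ref{def:Spec} that structure arguments of constructors in left-hand sides are variables, cf.\ Example~\ref{ex:CexpConsNonVarArg} --- a blocked (parallel) step followed by an allowed step can be permuted \emph{exactly} into the allowed step followed by a parallel residual step ending in the \emph{same} term $t''$. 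With this exact commutation there is no drifting simulation to control: one repeatedly pushes the next allowed step of $\rho$ backwards over the pending blocked steps, extending a genuine $\muto$-reduction from $t_0$, and outermost-fairness guarantees infinitely many allowed steps to push. Without proving a commutation lemma of this kind (or an equivalent preservation-of-matching argument), your proof is incomplete at its core, since this is the precise point where the theorem would fail for non-proper specifications.
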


Before proving the above theorem, we will show first that it subsumes
Theorem~\ref{thm:SyntacticCheck}. Intuitively, this holds because structure
arguments of constructors are blocked, and if every right-hand side of $\R_s$
starts with a constructor then the number of allowed redexes of sort~$s$ in a
term steadily decreases.

\begin{proposition}
\label{prop:CSimpliesSyn}
Let $\Spec = (\Sigma_d, \Sigma_s, \C, \R_d, \R_s)$ be a proper specification.
If for all rules $\ell \to r \in \R_s$ we have $\rt(r) \in \C$, then
$\R_d \cup \R_s$ is $\mu_{\Spec}$-terminating.
\end{proposition}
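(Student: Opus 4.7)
The plan is to exhibit a well-founded, $\Nats$-valued measure on terms that strictly decreases under every $\muto_{\R_s}$-step and is preserved by every $\muto_{\R_d}$-step; combined with termination of~$\R_d$, this will rule out any infinite $\mu$-reduction. Concretely, I would set
\[
N_s(t) \;=\; |\{\, p \in \Pos_\mu(t) \mid \rt(t|_p) \in \Sigma_s \setminus \C \,\}|,
\]
i.e.\ the number of allowed positions of~$t$ whose root is a defined structure symbol.

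The central step to verify is that every $\muto_{\R_s}$-step strictly decreases~$N_s$. Suppose $t \muto_{\ell \to r,\, p} t'$ with $\ell \to r \in \R_s$. Then $p \in \Pos_\mu(t)$ and $\rt(t|_p) = \rt(\ell) \in \Sigma_s \setminus \C$, so $p$ contributes $1$ to $N_s(t)$. After the step, $\rt(t'|_p) = \rt(r) \in \C$ by hypothesis, so $p$ no longer contributes. Crucially, by Definition~\ref{def:Mu} we have $\mu(c) = \{1,\dotsc,\ar_d(c)\}$ for every $c \in \C$, so any allowed position strictly below~$p$ in~$t'$ must traverse only data arguments of~$c$ and hence has sort~$d$; its root therefore lies in~$\Sigma_d$, never in $\Sigma_s \setminus \C$. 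Since positions incomparable with~$p$ are unaffected, this yields $N_s(t') < N_s(t)$.

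The $\muto_{\R_d}$-case is straightforward: an $\R_d$-step only modifies a data subterm, whose positions all have sort~$d$ and carry only symbols from~$\Sigma_d$ both before and after the step, so the set of positions counted by $N_s$ is unchanged, giving $N_s(t') = N_s(t)$.

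To conclude, I would assume for contradiction an infinite $\muto_{\R_d \cup \R_s}$-reduction. If it contains infinitely many $\R_s$-steps then $N_s$ decreases infinitely often, contradicting $N_s \in \Nats$. Otherwise a tail of the reduction consists solely of $\R_d$-steps starting from some fixed term~$t^*$; each such step lies inside one of the finitely many maximal data subterms at allowed positions of~$t^*$, so by pigeonhole some such subterm would admit an infinite $\R_d$-reduction, contradicting termination of~$\R_d$. The main obstacle is the $\muto_{\R_s}$ inequality: the key point is that the sort discipline, together with the fact that $\mu$ blocks structure arguments of constructors, prevents any defined structure symbol from appearing at an allowed position strictly below a constructor root.
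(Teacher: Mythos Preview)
Your argument is correct and follows the same overall strategy as the paper: exhibit a natural-number measure on terms that strictly decreases under every $\muto_{\R_s}$-step and does not increase under $\muto_{\R_d}$-steps, then use termination of $\R_d$ to exclude an infinite tail of data steps. The paper's measure is $\lvert\PosRedS_\mu(t)\rvert$, the number of allowed positions that are \emph{redexes} of sort~$s$, whereas your $N_s$ counts allowed positions whose root is a \emph{defined structure symbol}. Your choice is slightly simpler and in fact more robust for the $\R_d$-case: an $\R_d$-step can create a new $\R_s$-redex at a position strictly above it (reducing a data argument may make an $\R_s$-left-hand side match), so the paper's set $\PosRedS_\mu$ is not literally preserved by $\R_d$-steps as claimed there, while your $N_s$ clearly is, since root symbols outside the rewritten data subterm are untouched. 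One small wording point: when you write ``positions incomparable with~$p$ are unaffected'', you should also cover positions strictly above~$p$; these are comparable with~$p$, but their root symbols and their $\Pos_\mu$-status are equally unchanged by a step at~$p$, so they contribute identically to $N_s(t)$ and $N_s(t')$.
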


\begin{proof}
Let $t \in \T(\Sigma_d \cup \Sigma_s, \V)$ be well-typed.
If $t$ has sort~$d$, then all subterms must also be of sort~$d$, as symbols from
$\Sigma_d$ only have arguments of that sort. Hence, rewriting can only be done
with rules from $\R_d$, which is assumed to be terminating.

Otherwise, let $t$ be of sort~$s$ and assume that~$t$ starts an infinite
$\mu$-reduction
$t = t_0 \muto_{\ell_0 \to r_0, p_0} t_1 \muto_{\ell_1 \to r_1, p_1} t_2
    \muto_{\ell_2 \to r_2, p_2} \dots$.
We define
$\PosRedS_\mu(t') = \{ p \in \Pos_\mu(t') \mid t'|_p$ is a redex of sort~$s \}$
for any term $t' \in \T(\Sigma_d \cup \Sigma_s, \V)$.
It will be proven that in every step $t_i \muto_{\ell_i \to r_i, p_i} t_{i+1}$
of the infinite reduction,
$\lvert\PosRedS_\mu(t_{i+1})\rvert \le \lvert\PosRedS_\mu(t_i)\rvert$
and that for steps with $\ell_i \to r_i \in \R_s$, we even have
$\lvert\PosRedS_\mu(t_{i+1})\rvert < \lvert\PosRedS_\mu(t_i)\rvert$.
To this end, case analysis of the rule $\ell_i \to r_i$ is performed.
If $\ell_i \to r_i \in \R_d$, then $t_i = t_i[\ell_i\sigma_i]_{p_i}$ and
$t_{i+1} = t_i[r_i\sigma_i]_{p_i}$ for some substitution $\sigma_i$.
Because $\ell_i, r_i \in \T(\Sigma_d,\V)$,
$\lvert\PosRedS_\mu(\ell_i\sigma_i)\rvert
= \lvert\PosRedS_\mu(r_i\sigma_i)\rvert = 0$
since all symbols in $\Sigma_d$ have arguments of sort~$d$.
Thus, $\PosRedS_\mu(t_{i+1}) = \PosRedS_\mu(t_i)$.
In the second case, $\ell_i \to r_i \in \R_s$. Let
$t_i = t_i[\ell_i\sigma_i]_{p_i}$ and
$t_{i+1} = t_i[r_i\sigma_i]_{p_i}$ for some substitution $\sigma_i$.
Then, $\PosRedS_\mu(t_i) =
    \PosRedS_\mu(t_i[z]_{p_i})
    \uplus \{ p_i.p \mid p \in \PosRedS_\mu(t_i|_{p_i}) \}$
for any variable $z \in \V$ of sort~$s$.
For $t_{i+1}$ we observe that
$\PosRedS_\mu(t_{i+1}) = \PosRedS_\mu(t_i[r_i\sigma_i]_{p_i}) =
    \PosRedS_\mu(t_i[z]_{p_i})
    \uplus \{ p_i.p \mid p \in \PosRedS_\mu(t_i[r_i\sigma_i]_{p_i}|_{p_i}) \}$
for any variable $z \in \V$ of sort~$s$.
Here, it holds that
$\PosRedS_\mu(t_i|_{p_i}) = \PosRedS_\mu(\ell_i\sigma_i) \ni \epsilon$,
therefore $p_i \in \PosRedS_\mu(t_i)$.
Furthermore, $\PosRedS_\mu(t_i[r_i\sigma_i]_{p_i}|_{p_i}) =
    \PosRedS_\mu(r_i\sigma_i) = \emptyset$,
since $\rt(r_i) \in \C$ by assumption, hence
$\mu(\rt(r_i)) = \{ 1, \dotsc, \ar_d(\rt(r_i)) \}$ and because symbols from
$\Sigma_d$ only have arguments of sort~$d$.
Thus, $\PosRedS_\mu(t_{i+1}) \subsetneq \PosRedS_\mu(t_i)$.

Combining these observations, we therefore only have finitely many reductions
with rules from $\R_s$ in the infinite reduction. Thus, an infinite tail of
steps with rules from $\R_d$ exists. This however contradicts the assumption
that $\R_d$ is terminating, hence no infinite $\mu$-reduction can exist which
proves $\mu$-termination of $\R_d \cup \R_s$.
\end{proof}

Hence, we could restrict ourselves to analyzing context-sensitive termination
only. However, the syntactic check of Theorem~\ref{thm:SyntacticCheck} can be
done very fast and should therefore be the first method to try.

In order to prove Theorem~\ref{thm:CStermination} we will show that a maximal
outermost-fair reduction that never reaches a constructor entails an infinite
$\mu$-reduction.
For this purpose we need the following lemma,
which shows that in every ground term not starting with a constructor there
exists a redex that is not blocked by the replacement map~$\mu$.

\begin{lemma}
\label{lem:ExistRedex}
Let $\Spec = (\Sigma_d,\Sigma_s,\C,\R_d,\R_s)$ be a proper specification.
For all ground terms $t$ of sort $s$ with $\rt(t) \notin \C$ there exists a
position $p \in \Pos_\mu(t)$ such that $t \to_{p}$.
\end{lemma}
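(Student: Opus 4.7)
My plan is structural induction on $t$. Write $t = f(u_1, \ldots, u_m, t_1, \ldots, t_n)$ with $f \in \Sigma_s \setminus \C$. Since $\epsilon \in \Pos_\mu(t)$ unconditionally, if some rule of $\R_s$ matches $t$ at the root I take $p = \epsilon$ and am done; all remaining effort concerns the case in which no rule applies at the root.

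The core of the argument is the following intermediate claim: if no rule applies at the root, then some index $i \in \mu_\Spec(f)$ satisfies one of (a)~$i \le m$ and $u_i \notin \NF_\gnd(\R_d)$; (b)~$i = m+j$ and $\rt(t_j) \notin \C$; or (c)~$i = m+j$ with $t_j = c(u_1', \ldots, u_k', t_1', \ldots, t_l')$, $c \in \C$, and some $u_\alpha' \notin \NF_\gnd(\R_d)$. Given such an $i$, the $\mu$-allowed redex is easy to extract: in case (a) a $\R_d$-redex inside $u_i$ sits at a $\mu$-allowed position (data symbols have full replacement maps), which combined with $i \in \mu_\Spec(f)$ gives an allowed redex of $t$; in case (b) the induction hypothesis applied to the strict subterm $t_j$ yields an allowed redex that lifts through the allowed position $m+j$; in case (c) the $\R_d$-redex inside $u_\alpha'$ is reached via $(m{+}j).\alpha$, and $\alpha \in \mu_\Spec(c) = \{1,\ldots,\ar_d(c)\}$ holds because $c \in \C$.

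For the intermediate claim itself I would argue by contradiction. Suppose every $i \in \mu_\Spec(f)$ already satisfies the exhaustiveness precondition on $t$'s arguments. I construct a surrogate term $t^*$ by leaving arguments at allowed positions unchanged and replacing arguments at blocked positions with ``safe'' witnesses, namely a ground $\R_d$-normal form for a blocked data position and a ground term of the form $c(\vec{u'},\vec{t'})$ with $c \in \C$ and data arguments in $\NF_\gnd(\R_d)$ for a blocked structure position. By construction $t^*$ satisfies every hypothesis of exhaustiveness, so some left-hand side $\ell$ of $\R_s$ matches $t^*$. By the definition of $\mu_\Spec$, $\ell|_i$ is a variable for every $i \notin \mu_\Spec(f)$, so matching depends only on arguments at allowed positions, where $t$ and $t^*$ coincide; hence $\ell$ also matches $t$, giving the contradiction. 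The main obstacle is precisely this surrogate step: what makes it work is that $\mu_\Spec$ blocks a position on a defined structure symbol only when every left-hand side has a variable there, which is exactly the property needed for $t^*$ to faithfully stand in for $t$ with respect to rule matching.
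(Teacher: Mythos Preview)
Your proof is correct and follows essentially the same approach as the paper: structural induction on $t$, case analysis on the arguments at $\mu$-allowed positions, and the surrogate-term trick to transfer exhaustiveness from a term with ``safe'' blocked arguments back to $t$ using that $\mu_\Spec$ blocks a position of $f$ only when every left-hand side has a variable there. Your explicit case~(c)---a non-normal-form data argument sitting directly below a constructor at an allowed structure position---is in fact more careful than the paper's own proof, which jumps straight from ``$\rt(t_i)\in\C$ for all allowed $i$'' to ``$t\to_\epsilon$ by exhaustiveness'' without isolating this subcase.
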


\begin{proof}
Let $t = f(u_1, \dotsc, u_m, t_1, \dotsc, t_n)$. We perform structural induction
on $t$.
If $u_i \to_{p'}$ for some $1 \le i \le m$ with $i \in \mu(f)$, then
$t \to_{i.p'}$ and $i.p' \in \Pos_\mu(t)$ since arguments of data symbols are
never blocked.
Thus, we assume in the remainder that
$u_i \in \NF_\gnd(\R_d)$ for all $1 \le i \le m$ with $i \in \mu(f)$.
If $\rt(t_i) \in \C$ for all $1 \le i \le n$, $i \in \mu(f)$, then
$t \to_{\epsilon}$ by the exhaustiveness requirement (and because all arguments
$u_j$, $t_j$ with $j \notin \mu(f)$ are being matched by pairwise different
variables, due to left-linearity).
Otherwise, there exists $1 \le i \le n$, $i \in \mu(f)$ such that
$\rt(t_i) \notin \C$. By the induction hypothesis we get that $t_i \to_{p'}$
for some $p' \in \Pos_\mu(t_i)$. Therefore, we also have
$i.p' \in \Pos_\mu(t)$ and $t \to_{i.p'}$.
\end{proof}

A second lemma that is required for the proof of Theorem~\ref{thm:CStermination}
states that a specialized version of the
Parallel Moves Lemma~\cite[Lemma~6.4.4]{BN98}
holds for our restricted format of term rewrite systems.
It allows us to swap the order of reductions blocked by~$\mu$ with reductions
not blocked by~$\mu$.
To formulate the lemma, we need the notion of a parallel reduction step
$t \parto_P t'$, which is defined for a set
$P = \{ p_1, \dotsc, p_n \} \subseteq \Pos(t)$ such that for every pair
$1 \le i < j \le n$ we have $p_i \parallel p_j$
and a term
$t = t[\ell_1\sigma_1]_{p_1} \ldots [\ell_n\sigma_n]_{p_n}$ as
$t' = t[r_1\sigma_1]_{p_1} \ldots [r_n\sigma_n]_{p_n}$ for rules
$\ell_i \to r_i \in \R_d \cup \R_s$ and substitutions $\sigma_i$,
$1 \le i \le n$.

\begin{lemma}
\label{lem:SpecialPML}
Let $\Spec = (\Sigma_d,\Sigma_s,\C,\R_d,\R_s)$ be a proper specification.
For all ground terms $t, t', t''$ and positions $p \in \Pos_\mu(t')$,
$P \subseteq \blocked(t)$ with $t \parto_P t' \to_{\ell \to r, p} t''$,
a term $\hat{t}$ and a set $P' \subseteq \Pos(\hat{t})$ exist such that
$t \to_{\ell \to r, p} \hat{t} \parto_{P'} t''$.
\end{lemma}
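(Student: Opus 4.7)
The plan is to analyse how the position $p$, at which the allowed step occurs in $t'$, sits relative to the parallel blocked positions in $P$. I would first prove that $p$ cannot lie strictly below any $q \in P$: since $P$ is pairwise parallel, the symbols on the path from the root to each $q$ coincide in $t$ and $t'$, so every such $q$ remains blocked in $t'$ and hence all of its descendants in $t'$ are blocked, contradicting $p \in \Pos_\mu(t')$. The same observation also gives $p \in \Pos_\mu(t)$, and the set $P$ splits into $P_1 = \{q \in P \mid p < q\}$ and $P_2 = \{q \in P \mid p \parallel q\}$.

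If $P_1 = \emptyset$ the claim is immediate: the parallel step leaves $t|_p$ untouched, so $t|_p = \ell\sigma$, and I would set $\hat{t} = t[r\sigma]_p$ and $P' = P$. If $P_1 \neq \emptyset$, I would first rule out $\ell \to r \in \R_d$: a data rule would force $t|_p$ to be of sort~$d$, so every descendant of $p$ in $t$ would sit inside data symbols, all of whose arguments are in $\mu$; combined with $p \in \Pos_\mu(t)$ this would give $P_1 \subseteq \Pos_\mu(t)$, contradicting $P \subseteq \blocked(t)$. Hence $\ell \to r \in \R_s$, so left-linearity of $\ell$ is available.

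Using canonicity, every non-variable position of $\ell$ lies in $\Pos_\mu(\ell)$, so the projected positions $P_1' = \{q' \mid p.q' \in P_1\}$, which are blocked in $\ell\sigma = t'|_p$, each lie at or strictly below some variable position $p_v$ of $\ell$. Linearity of $\ell$ lets me define a substitution $\sigma^*$ that agrees with $\sigma$ on every unaffected variable and, on each affected variable $x_v$, equals the term obtained from $\sigma(x_v)$ by undoing the rewrites of the parallel step at the positions of $P_1'$ at or below $p_v$. Then $t|_p = \ell\sigma^*$, so I would take $\hat{t} = t[r\sigma^*]_p$, and let $P'$ consist of $P_2$ together with, for every occurrence $p_v^r$ of each affected $x_v$ in $r$ and each $p_v.q'' \in P_1'$, the position $p.p_v^r.q''$; replaying those rewrites inside every copy of $\sigma^*(x_v)$ in $r\sigma^*$ produces $t''$.

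The main obstacle is making this last construction precise: left-linearity of $\ell$ is essential for $\sigma^*$ to be well-defined (otherwise the same variable could be matched against conflicting undone values), and $P'$ must be verified to be pairwise parallel, combining the parallelness of $P_2$ (to each other and to every position below $p$), of distinct occurrences of a single variable in $r$, and of $P_1'$ within each affected image $\sigma^*(x_v)$.
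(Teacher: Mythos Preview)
The proposal is correct and follows essentially the same approach as the paper's proof: both argue that positions in $P$ remain blocked in $t'$ (hence $p$ is strictly above or parallel to each of them), dispose of the case $\ell \to r \in \R_d$ by observing that data arguments are never blocked, and for $\ell \to r \in \R_s$ use left-linearity together with the fact that the blocked positions below $p$ must fall under variable positions of $\ell$ to build the modified substitution and reorder the steps. The only cosmetic difference is that you invoke canonicity of $\mu_\Spec$ directly, whereas the paper unpacks this into the two cases of Definition~\ref{def:Mu} (variable-in-all-rules versus structure argument of a constructor, the latter forced to be a variable by the proper-specification requirement); your formulation of $\sigma^*$ is in fact slightly more careful than the paper's, since it explicitly allows several parallel blocked positions to sit under the same variable of $\ell$.
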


\begin{proof}
Let $P = \{ p_1, \dotsc, p_k \} \subseteq \blocked(t)$.
Then $t = t [\ell_1\sigma_1]_{p_1} \ldots [\ell_k\sigma_k]_{p_k}
\parto_P
t [r_1\sigma_1]_{p_1} \ldots [r_k\sigma_k]_{p_k} = t'
= t'[\ell\sigma]_{p}
$ for some rules
$\ell_1 \to r_1, \dotsc, \ell_k \to r_k, \ell \to r \in \R_d \cup \R_s$ and
substitutions $\sigma_1, \dotsc, \sigma_k, \sigma$.
W.l.o.g., let $0 \le j \le k$ be such that
$p_i \not\parallel p$ for all $1 \le i \le j$
and $p_i \parallel p$ for all $j < i \le k$.
Since $p \in \Pos_\mu(t')$ and $p_i \in \blocked(t')$, it must
hold that $p < p_i$ for all $1 \le i \le j$.
Therefore, the term $t'$ must have the shape
$
t' = t\left[ \ell\sigma
        [r_1\sigma_1]_{p_1 - p} \ldots [r_j\sigma_j]_{p_j-p}
\right]_{p}\;
    [r_{j+1}\sigma_{j+1}]_{p_{j+1}} \ldots [r_k\sigma_k]_{p_k}
$.

If $\ell \to r \in \R_d$, then it must hold that $j=0$, since arguments of data
symbols are never blocked. Hence, the lemma trivially holds in this case, as all
reductions are on independent positions.

Otherwise, $\ell \to r \in \R_s$.
Because the positions $p_i$ for $1 \le i \le j$ are blocked, it must be the case
that they are either below a variable in all rules containing a certain
symbol~$f$ (hence, they are also below a variable in~$\ell$), or they are
below a structure argument of a constructor $c \in \C$. By requirement of
specifications, if a constructor is present on a left-hand side of a rule, all
its structure arguments must be variables. Thus, we conclude that all positions
$p_i$, and thereby all terms $r_i\sigma_i$, are below some variable of
$\ell$ in $t'$. Additionally, the left-hand side $\ell$ is required to be
linear, therefore there exist pairwise different variables $x_1, \dotsc, x_j$,
contexts $C_1, \dotsc, C_j$, and a substitution $\sigma'$ being like $\sigma$
except that $\sigma'(x_i) = x_i$ for $1 \le i \le j$ such that:
\[
\begin{array}{r@{\;\;}l@{\;\;}l@{\;\;}c@{\;\;}l}
t' &=&
t\left[ \ell\sigma'
        \{
            x_1 {:=} C_1[r_1\sigma_1], \dotsc, x_j {:=} C_j[r_j\sigma_j]
        \}
\right]_{p}\;
    [r_{j+1}\sigma_{j+1}]_{p_{j+1}} \ldots [r_k\sigma_k]_{p_k}
\\
&\to_{p}&
t\left[ r\sigma'
        \{
            x_1 {:=} C_1[r_1\sigma_1], \dotsc, x_j {:=} C_j[r_j\sigma_j]
        \}
\right]_{p}\;
    [r_{j+1}\sigma_{j+1}]_{p_{j+1}} \ldots [r_k\sigma_k]_{p_k}
&=& t''
\end{array}
\]

We conclude that $p \in \Pos_\mu(t)$, as all reduction steps in
$t \parto_P t'$ are either below or independent of $p$. Thus:
\[
\begin{array}{r@{\;\;}l@{\;\;}l@{\;\;}c@{\;\;}l}
t &\,=&
t\left[ \ell\sigma'
        \{
            x_1 {:=} C_1[\ell_1\sigma_1], \dotsc, x_j {:=} C_j[\ell_j\sigma_j]
        \}
\right]_{p}\;
    [\ell_{j+1}\sigma_{j+1}]_{p_{j+1}} \ldots [\ell_k\sigma_k]_{p_k}
\\
&\to_{p}&
t\left[ r\sigma'
        \{
            x_1 {:=} C_1[\ell_1\sigma_1], \dotsc, x_j {:=} C_j[\ell_j\sigma_j]
        \}
\right]_{p}\;
    [\ell_{j+1}\sigma_{j+1}]_{p_{j+1}} \ldots [\ell_k\sigma_k]_{p_k}
&=& \hat{t}
\\
&\parto_{P'}&
t\left[ r\sigma'
        \{
            x_1 {:=} C_1[r_1\sigma_1], \dotsc, x_j {:=} C_j[r_j\sigma_j]
        \}
\right]_{p}\;
    [r_{j+1}\sigma_{j+1}]_{p_{j+1}} \ldots [r_k\sigma_k]_{p_k}
&=& t''
\end{array}
\]

In the second reduction step, the positions of the terms $\ell_i\sigma_i$
in $\hat{t}$ constitute the set $P' \subseteq \Pos(\hat{t})$.
\end{proof}

We are now able to prove our main theorem, showing that context-sensitive
termination implies productivity of the considered proper specification.

\begin{proof}[Proof of Theorem~\ref{thm:CStermination}]
Assume $\Spec = (\Sigma_d,\Sigma_s,\C,\R_d,\R_s)$ is not strongly productive.
Then, a maximal outermost-fair reduction sequence
$\rho \equiv t_0 \to t_1 \to \dots$ exists where for all
$k \in \Nats$, $\rt(t_k) \notin \C$.

This reduction sequence is infinite, since otherwise it would end in a term
$t_m$ for some $m \in \Nats$ with $\rt(t_m) \notin \C$. Then however, according
to Lemma~\ref{lem:ExistRedex}, the term $t_m$ would contain a redex, giving a
contradiction to the sequence being maximal.

The sequence might however perform reductions that are below a variable argument
of a constructor or below a variable in all left-hand sides of a defined symbol.
These reduction steps are not allowed when considering context-sensitive
rewriting with respect to~$\mu$. Such reductions however can be reordered.
First, we observe that there is always a redex which is not blocked, due to 
Lemma~\ref{lem:ExistRedex}, thus there is also an outermost such one. Because
the reduction is outermost-fair, and because reductions below a variable cannot
change the matching of a rule, as shown in Lemma~\ref{lem:SpecialPML}, such
redexes must be contracted an infinite number of times in the infinite reduction
sequence $\rho$. Thus, we can reorder the reduction steps in $\rho$:
If there is a (parallel) reduction below a variable before performing a step
that is allowed by~$\mu$, then we swap these two steps using
Lemma~\ref{lem:SpecialPML}. Repeating this, we get an infinite reduction
sequence $\rho'$ consisting of steps which are not blocked by~$\mu$. Thus, this
is an infinite $\mu$-reduction sequence, showing that $\R_d \cup \R_s$ is not
$\mu$-terminating, which proves the theorem.
\end{proof}

The technique of Theorem~\ref{thm:CStermination}, i.e., proving
$\mu$-termination of the corresponding context-sensitive TRS, is able to prove
strong productivity of Example~\ref{ex:NonSyntax}. By Definition~\ref{def:Mu},
the corresponding replacement map~$\mu$ is defined as
$\mu(\fsym{0}) = \mu(\fsym{1}) = \mu(\fsym{ones}) = \mu(\fsym{finZeroes}) =
\emptyset$ and
$\mu(\fsym{f}) = \mu(\fsym{:}) = \{ 1 \}$, i.e., rewriting is allowed on all
positions except those that are inside a second argument of the
constructor~$:$.
Context-sensitive termination of the TRS together with the above replacement
map~$\mu$ can for example be shown by the tool AProVE~\cite{AProVE06}.
Thus, productivity of that example has been shown according to
Theorem~\ref{thm:CStermination}.
Also, strong productivity of the proper specification consisting of the rules
$\fsym{random} \to \zero : \fsym{random}$,
$\fsym{random} \to \fsym{id}(\one : \fsym{random})$, and $\fsym{id}(xs) \to xs$
can be proven using Theorem~\ref{thm:CStermination} and the tool
AProVE~\cite{AProVE06}, where
$\mu(\zero) = \mu(\one) = \mu(\fsym{random}) = \mu(\fsym{id}) = \emptyset$
and $\mu(:) = \{ 1 \}$
according to Definition~\ref{def:Mu}.
Note that for this example, one could also have used $\mu(\fsym{id}) = \{ 1 \}$,
i.e., here the removal of argument positions in the second item of
Definition~\ref{def:Mu} is irrelevant.

This is not the case in the next example, showing that this improvement,
which was inspired by~\cite{EH11} and blocks more argument positions,
allows to prove productivity of specifications where this would otherwise not be
possible.

\begin{example}
\label{ex:BlockingMoreArgs}
Consider the following proper specification, given by the TRS~$\R_s$:
\[
\begin{array}{r@{\;\;}c@{\;\;}l@{\qquad\qquad}r@{\;\;}c@{\;\;}l}
\fsym{a} &\to& \fsym{f}(\one : \fsym{a}, \fsym{a})
&
\fsym{f}(x : xs, ys) &\to& x : ys
\\&&&
\fsym{f}(\fsym{f}(xs,ys),zs) &\to& \fsym{f}(xs,\fsym{f}(ys,zs))
\end{array}
\]

When defining $\mu(\fsym{1}) = \mu(\fsym{a}) = \emptyset$ and
$\mu(:) = \{ 1 \}$ by the first case of Definition~\ref{def:Mu}, and
defining $\mu(\fsym{f}) = \{ 1,2 \}$ (i.e., not removing any argument
positions, as was done in the orthogonal case in~\cite{ZR10}),
then an infinite $\mu$-reduction exists:
$
    \fsym{a} \;\muto\;
    \fsym{f}(\one : \fsym{a}, \fsym{a}) \;\muto\;
    \fsym{f}(\one : \fsym{a}, \fsym{f}(\one : \fsym{a}, \underline{\fsym{a}}))
    \;\muto\; \ldots
$

\noindent
This reduction can be continued in the above style by reducing the underlined
redex further, which will always create the term~$\fsym{a}$ on an allowed
position of the form $2^n$. However, such positions are not required for any
of the $\fsym{f}$-rules to be applicable; for both rules it holds that all
subterms of left-hand sides that start with the symbol~$\fsym{f}$,
which are the terms $\fsym{f}(x : xs, ys)$, $\fsym{f}(\fsym{f}(xs,ys),zs)$, and
$\fsym{f}(xs,ys)$,
have a variable as second argument. Thus, according to Definition~\ref{def:Mu},
the replacement map~$\mu'$ can be defined to be like $\mu$, except that
$\mu'(\fsym{f}) = \{ 1 \}$.
With this improved replacement map, $\mu'$-termination of the above TRS can for
example be proven by the tool AProVE~\cite{AProVE06},
which implies productivity by Theorem~\ref{thm:CStermination}.
\end{example}

Checking productivity in this way, i.e., by checking context-sensitive
termination, can only prove productivity but not disprove it. This is
illustrated in the next example.

\begin{example}
\label{ex:IncompleteCSterm}
Consider the proper specification with the following rules in $\R_s$:
\[
\begin{array}{rcl@{\qquad\qquad}rcl@{\qquad\qquad}rcl}
\fsym{a} &\to& \fsym{f}(\fsym{a})
&
\fsym{f}(x:xs) &\to& x : \fsym{f}(xs)
&
\fsym{f}(\fsym{f}(xs)) &\to& \one : xs
\end{array}
\]

Starting in the term $\fsym{a}$, we observe that an infinite $\mu$-reduction
starting with $\fsym{a} \to \fsym{f}(\underline{\fsym{a}})$ exists,
which can be continued by reducing the underlined redex repeatedly, since
$\mu(\fsym{f}) = \{ 1 \}$.
Thus, the example is not $\mu$-terminating. However, the specification is
productive, as can be shown by case analysis based on the root symbol of
some arbitrary ground term $t$. In case $\rt(t) = {:}$, then nothing has to be
done, according to Proposition~\ref{prop:NonOrthoProd}.
Otherwise, if $\rt(t) = \fsym{a}$, then any maximal outermost-fair reduction
must start with $t = \fsym{a} \to \fsym{f}(\fsym{a})$, thus we can reduce our
analysis to the final case, where $\rt(t) = \fsym{f}$. In this last case,
$t = \fsym{f}(t')$. Due to the rules for the symbol $\fsym{f}$, we have to
perform a further case analysis based on the root symbol of $t'$.
If $\rt(t') = {:}$, i.e., $t' = u:t''$ for some terms $u$ and $t''$,
then this constructor cannot be reduced further.
Also, $t = \fsym{f}(u:t'')$ is a redex, due to the second rule.
Hence, in any maximal outermost-fair reduction sequence this redex must
eventually be reduced using the second rule, which results in a term with the
constructor~$:$ at the root.
For $\rt(t') = \fsym{a}$ we again must reduce
$t = \fsym{f}(\fsym{a}) \to \fsym{f}(\fsym{f}(\fsym{a}))$.
Finally, in case $\rt(t') = \fsym{f}$, we have two possibilities. The first
one occurs when the term $t'$ is eventually reduced at the root. Since
$\rt(t') = \fsym{f}$, this has to happen with either of the $\fsym{f}$-rules,
creating a constructor~$:$ which, as we already observed, must eventually result
in the term $t$ also being reduced to a term with the constructor~$:$ at the
root. Otherwise, in the second possible scenario, the term $t'$ is never reduced
at the root. Then however, an outermost redex of the shape
$\fsym{f}(\fsym{f}(t''))$ exists in all terms that $t$ can be rewritten to in
this way, thus it has to be reduced eventually with the third rule. This again
creates a term with constructor~$:$ at the root. Combining all these
observations, we see that in every maximal outermost-fair reduction there exists
a term with the constructor~$:$ as root symbol, which proves productivity due to
Proposition~\ref{prop:NonOrthoProd}.
\end{example}

In the remainder of this section we want to illustrate the requirements
of proper specifications in Definition~\ref{def:Spec}, namely that the TRS
$\R_s$ should be left-linear and that structure arguments of constructors in
left-hand sides must not be structure symbols, i.e., they must be variables.
We begin with an example specification that is not left-linear and not
productive, but $\mu$-terminating.

\begin{example}
\label{ex:CexpLeftLin}
We consider the non-proper specification
$\Spec = (\Sigma_d,\Sigma_s,\C,\R_d,\R_s)$ with $\Sigma_d = \R_d = \emptyset$,
$\C = \{ \fsym{a}, \fsym{c} \} \subseteq \Sigma_s =
\{ \fsym{a}, \fsym{b}, \fsym{c}, \fsym{f} \}$,
and the following rules in $\R_s$ which also imply the arities of the symbols:
\[
\begin{array}{rcl@{\qquad}rcl@{\qquad}rcl@{\qquad}rcl}
\fsym{b} &\to& \fsym{a}
&
\fsym{f}(\fsym{a}) &\to& \fsym{a}
&
\fsym{f}(\fsym{c}(x,x)) &\to& \fsym{f}(\fsym{c}(\fsym{a},\fsym{b}))
&
\fsym{f}(\fsym{c}(x,y)) &\to& \fsym{c}(x,y)
\end{array}
\]

The example specification is not productive, as it admits the infinite
outermost-fair reduction sequence
$\fsym{f}(\fsym{c}(\fsym{a},\fsym{a})) \to \fsym{f}(\fsym{c}(\fsym{a},\fsym{b}))
\to \fsym{f}(\fsym{c}(\fsym{a},\fsym{a})) \to \dots$.
However, the TRS is $\mu$-terminating, as shown by the tool
AProVE~\cite{AProVE06}, where $\mu(\fsym{f}) = \{ 1 \}$ and
$\mu(\fsym{a}) = \mu(\fsym{b}) = \mu(\fsym{c}) = \emptyset$. This is the case
because rewriting below the constructor~$\fsym{c}$ is not allowed, thus the
second step of the above reduction sequence is blocked.
The reason why Theorem~\ref{thm:CStermination} fails is the
reordering of reductions, since in this example a reduction of the form
$t \parto_P t' \to_{\ell \to r, p} t''$
(here: $
    \fsym{f}(\fsym{c}(\fsym{a},\fsym{b}))
\parto_{\{1.1\}}
    \fsym{f}(\fsym{c}(\fsym{a},\fsym{a}))
\to_{\fsym{f}(\fsym{c}(x,x)) \to \fsym{f}(\fsym{c}(\fsym{a},\fsym{b})),\epsilon}
    \fsym{f}(\fsym{c}(\fsym{a},\fsym{b}))
$)
does not imply that $t \to_{\ell \to r, p}$
(in the example, $\fsym{f}(\fsym{c}(\fsym{a},\fsym{b}))
\not\to_{\fsym{f}(\fsym{c}(x,x)) \to \fsym{f}(\fsym{c}(\fsym{a},\fsym{b})),
    \epsilon}$),
i.e., Lemma~\ref{lem:SpecialPML} does not hold.
\end{example}

The next example illustrates why non-variable structure arguments of
constructors are not allowed in left-hand sides.

\begin{example}
\label{ex:CexpConsNonVarArg}
Let $\R_s$ contain the following rules:
\[
\begin{array}{rcl@{\qquad}rcl@{\qquad}rcl}
\fsym{ones} &\to& \one : \fsym{ones}
&
\fsym{f}(x : y : xs) &\to& \fsym{f}(y : xs)
&
\fsym{f}(x : xs) &\to& x : xs
\end{array}
\]

Here, we have non-productivity of the corresponding non-proper specification due
to the infinite outermost-fair reduction sequence
$\fsym{f}(\fsym{ones}) \to_1 \fsym{f}(\one : \fsym{ones}) \to_{1.2}
\fsym{f}(\one : \one : \fsym{ones}) \to_{\epsilon} \fsym{f}(\one : \fsym{ones})
\to \ldots$,
however the second step is not allowed when performing context-sensitive
rewriting, since $\mu(\fsym{:}) = \{ 1 \}$.
Using the tool AProVE~\cite{AProVE06}, context-sensitive termination of the
above TRS together with the replacement map~$\mu$ can be shown.

We can however unfold this example (cf.~\cite{EH11,Z09}),
which makes the resulting specification
proper, by introducing a fresh symbol $\fsym{g}$ and replacing the two
rules for $\fsym{f}$ with the following three rules:
\[
\begin{array}{rcl@{\qquad}rcl@{\qquad}rcl}
\fsym{f}(x : xs) &\to& \fsym{g}(x,xs)
&
\fsym{g}(x, y:xs) &\to& \fsym{f}(y:xs)
&
\fsym{g}(x,xs) &\to& x : xs
\end{array}
\]

Then, in the corresponding context-sensitive TRS, we have
$\mu(\fsym{f}) = \mu(\fsym{:}) = \{ 1 \}$, $\mu(\fsym{g}) = \{ 2 \}$, and
$\mu(\fsym{ones}) = \mu(0) = \mu(1) = \emptyset$.
This context-sensitive TRS is not $\mu$-terminating, since it admits the
infinite reduction
$
\fsym{f}(\fsym{ones})
\muto_{1}
\fsym{f}(\one : \fsym{ones})
\muto_{\epsilon}
\fsym{g}(\one, \fsym{ones})
\muto_{2}
\fsym{g}(\one, \one : \fsym{ones})
\muto_{\epsilon}
\fsym{f}(\one : \fsym{ones})
\muto
\dots
$.
\end{example}

It should be noted that the restriction for left-hand sides to only contain
variables in constructor arguments was already made in~\cite{ZR10}. This is the
case because matching constructors nested within constructors would otherwise
invalidate the approach of disallowing rewriting inside structure arguments of
constructors.

\section{Application to Hardware Circuits}
\label{sec:ApplHW}

Proving productivity can be used to verify stabilization of hardware circuits.
In such a circuit, the inputs can be seen as an infinite stream of zeroes and
ones, which in general can occur in any arbitrary sequence. Furthermore, a
circuit contains a number of internal signals, which also carry different
Boolean values over time.

To store a value over time, feedback loops are used. In such a loop, a value
that is computed from some logic function is also used as an input to that
function. Thus, it is desired that such values stabilize, instead of oscillating
infinitely.

To check this, productivity analysis can be used. We will illustrate this by
means of an example, that will be considered throughout the rest of this
section.

\begin{figure}
\begin{center}
\begin{tikzpicture}[
]
\node[trapezium, draw,
    trapezium left angle=70, trapezium right angle=70,
    rotate=-90,
    inner sep=.5cm
    ]
            (muxi) at (-2,0) {};
\node[circle,draw,right=-.4pt
    ]
        (minvi) at (muxi.top side) {};

\node[signal,draw,anchor=east]
        (d) at (muxi.south west) {\texttt{D}};
\node[signal,draw,anchor=east]
        (si) at (muxi.south east) {\texttt{SI}};
\node[signal,draw]
        (se) at (-3,-1.5) {\texttt{SE}};
\node[signal,draw]
        (ck) at (-3,-2.5) {\texttt{CK}};

\node[shape=not gate US,draw,
    logic gate inverted radius=.15cm
    ]
        (ckn) at (-1.5,-2.5) {};

\node[shape=not gate US,draw,
    logic gate inverted radius=.15cm
    ]
        (cknn) at (1,-2.5) {};

\draw[] (se.east) -| (muxi.east);
\draw[] (ck.east) -- (ckn.input);

\draw[] (ckn.output) -- (cknn.input);

\node[trapezium, draw, 
    trapezium left angle=70, trapezium right angle=70,
    rotate=-90,
    inner sep=.5cm
    ]
            (mux1) at (0,0) {};
\node[circle,draw,right=-.4pt
    ]
        (minv1) at (mux1.top side) {};

\node[shape=not gate US,draw,rotate=180,
    logic gate inverted radius=.15cm
    ]
        (inv1) at (0.25,2) {};

\draw[] (minvi.east) -- (-1,0) |- (mux1.south east)
        node[below,pos=0.5] {\texttt{next}};
\draw[] (ckn.output) -| (mux1.east);

\draw[] (minv1.east) -- (1,0) |- (inv1.input);
\draw[] (inv1.output) -- (-1,2) |- (mux1.south west);

\node[trapezium, draw, 
    trapezium left angle=70, trapezium right angle=70,
    rotate=-90,
    inner sep=.5cm
    ]
            (mux2) at (2.5,0) {};
\node[circle,draw,right=-.4pt
    ]
        (minv2) at (mux2.top side) {};

\node[shape=not gate US,draw,rotate=180,
    logic gate inverted radius=.15cm
    ]
        (inv2) at (2.75,2) {};

\draw[] (minv1.east) -- (1.5,0) |- (mux2.south east)
        node[below,pos=0.5] {\texttt{n1}};
\draw[] (cknn.output) -| (mux2.east);

\draw[] (minv2.east) -- (3.5,0) |- (inv2.input)
        node[right,pos=0.25] {\texttt{n2}};
\draw[] (inv2.output) -- (1.5,2) |- (mux2.south west);

\node[shape=not gate US,draw,
    logic gate inverted radius=.15cm
    ]
        (invq) at (4,0) {};

\node[shape=not gate US,draw,
    logic gate inverted radius=.15cm
    ]
        (invqn) at (5.5,2) {};

\draw[] (minv2.east) -- (invq.input);
\draw[] (invq.output) -- (5,0) |- (invqn.input);

\node[signal,draw] (q) at (7,0) {\texttt{Q}};
\node[signal,draw] (qn) at (7,2) {\texttt{QN}};

\draw[] (invq.output) -- (q.west);
\draw[] (invqn.output) -- (qn.west);

\end{tikzpicture}
\end{center}
\caption{Example hardware circuit}
\label{fig:Circuit}
\end{figure}
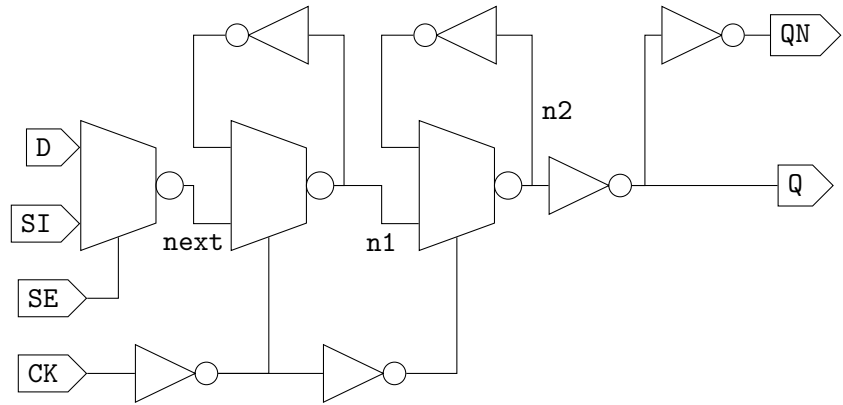

Consider the circuit shown in Figure~\ref{fig:Circuit},
which was constructed from the transistor netlist of the cell \texttt{SDFF\_X1}
in the Nangate Open Cell Library~\cite{OpenCellLibrary08_10_SP1}
and which implements a scanable D~flip-flop.
This circuit first selects, based on the value of the input
\texttt{SE} (scan enable), either the negation of the data input \texttt{D}
(in case \texttt{SE}=0)
or the negation of the scan data input \texttt{SI} (in case \texttt{SE}=1).
This value, called \texttt{next} in Figure~\ref{fig:Circuit}, is then fed into
another multiplexer (mux), for which a feedback loop exists.
This mux is controlled by the negation of the clock input \texttt{CK}. If the
clock is~0 then the negated value of \texttt{next} is forwarded to the output
\texttt{n1}, otherwise the stored value of \texttt{n1} is kept.
Similarly, \texttt{n2} implements such a latch structure, however this time the
latch forwards the negation of the \texttt{n1} input in case \texttt{CK} is 1,
and it keeps its value when \texttt{CK} is 0.
The outputs \texttt{Q} and \texttt{QN} are computed from this stored value
\texttt{n2}.

Note that a lot of the negations are only contained to refresh the signals,
otherwise a high voltage value might decay and not be detected properly anymore.

From the example circuit, we create a proper specification, where the data
symbols consist of the two Boolean values~$\zero$ and~$\one$ and the symbol
$\fsym{not}$ used for negating:
\[
\begin{array}{rcl@{\qquad\qquad}rcl}
\fsym{not}(\zero) &\to& \one
&
\fsym{not}(\one) &\to& \zero
\end{array}
\]

The structures we are interested in are infinite streams containing Boolean
values, thus the set of constructors is $\C = \{ {:} \}$. The structure
TRS $\R_s$ is shown in Figure~\ref{fig:HWrules}.

\begin{figure}[t]
\centering
\begin{align*}
  \fsym{rand} &\to \zero : \fsym{rand}
\\
  \fsym{rand} &\to \one : \fsym{rand}
\\
  \fsym{next}(\zero : ses,d : ds,si : sis) &\to
        \fsym{not}(d) : \fsym{next}(ses,ds,sis)
\\
  \fsym{next}(\one : ses,d : ds,si : sis) &\to
        \fsym{not}(si) : \fsym{next}(ses,ds,sis)
\\
\\
  \fsym{n1}(\zero : cks,\textit{nextv} : \textit{nexts},\textit{n1l})
&\to
    \fsym{not}(\textit{nextv}) : \fsym{n1}(cks,\textit{nexts},
                                                \fsym{not}(\textit{nextv}))
\\
  \fsym{n1}(\one : cks,\textit{nextv} : \textit{nexts},\textit{n1l})
&\to
            \fsym{n1}'(cks,\textit{nexts},\textit{n1l},
                    \fsym{not}(\fsym{not}(\textit{n1l})))
\\
  \fsym{n1}'(cks,\textit{nexts},\zero,\zero)
&\to
    \zero : \fsym{n1}(cks,\textit{nexts},\zero)
\\
  \fsym{n1}'(cks,\textit{nexts},\one,\one)
&\to
    \one : \fsym{n1}(cks,\textit{nexts},\one)
\\
  \fsym{n1}'(cks,\textit{nexts},\zero,\one)
&\to
        \fsym{n1}'(cks,\textit{nexts},\one,\fsym{not}(\fsym{not}(\one)))
\\
  \fsym{n1}'(cks,\textit{nexts},\one,\zero)
&\to
        \fsym{n1}'(cks,\textit{nexts},\zero,\fsym{not}(\fsym{not}(\zero)))
\\
\\
\fsym{n2}(\zero : cks,\textit{n1v} : \textit{n1s},\textit{n2l})
&\to
      \fsym{n2}'(cks,\textit{n1s},\textit{n2l},
            \fsym{not}(\fsym{not}(\textit{n2l})))
\\
\fsym{n2}(\one : cks,\textit{n1v} : \textit{n1s},\textit{n2l})
&\to
      \fsym{not}(n1v) :\fsym{n2}(cks,\textit{n1s},\fsym{not}(\textit{n1v})))
\\
\fsym{n2}'(cks,\textit{n1s},\zero,\zero) &\to
            \zero : \fsym{n2}(cks,\textit{n1s},\zero)
\\
\fsym{n2}'(cks,\textit{n1s},\one,\one) &\to 
            \one : \fsym{n2}(cks,\textit{n1s},\one)
\\
\fsym{n2}'(cks,\textit{n1s},\zero,\one) &\to
      \fsym{n2}'(cks,\textit{n1s},\one,\fsym{not}(\fsym{not}(\one)))
\\
\fsym{n2}'(cks,\textit{n1s},\one,\zero) &\to
      \fsym{n2}'(cks,\textit{n1s},\zero,\fsym{not}(\fsym{not}(\zero)))
\\
\\
\fsym{q}(\textit{n2v} : \textit{n2s}) &\to
    \fsym{not}(\textit{n2v}) : \fsym{q}(\textit{n2s})
\\
\\
\fsym{qn}(\textit{qv} : \textit{qs}) &\to
    \fsym{not}(\textit{qv}) : \fsym{qn}(\textit{qs})
\end{align*}
\caption{Structure TRS $\R_s$ for the circuit shown in Figure~\ref{fig:Circuit}}
\label{fig:HWrules}
\end{figure}

It should be remarked that in the shown rules, some simplifications regarding
the clock input \texttt{CK} have been made. The inverters for the clock have
been removed, and the two muxes that output the signals \texttt{n1} and
\texttt{n2} are provided with decoupled clock values.

The defined function symbols $\fsym{next}$, $\fsym{n1}$, $\fsym{n2}$,
$\fsym{q}$, and $\fsym{qn}$ reflect the wires and output signals with the
corresponding name in Figure~\ref{fig:Circuit}.
The constant~$\fsym{rand}$ is added to abstract the values of the inputs. It
provides a random stream of Boolean values, thus it is able to represent any
sequence of input values provided to the circuit.
The rules of the symbol $\fsym{next}$ implement the mux selecting either the
next data input value $d$ in case the next scan enable input value~$se$ is~$0$,
or the next scan input value $si$ in case~$se$ is~$1$.

The output of $\fsym{n1}$ is also computed by a mux, however, here the previous
output value has to be considered due to the feedback loop. We break the cycle
by introducing a new parameter~$\textit{n1l}$ that stores the previously output
value.
Then, the next value of the stream at $\fsym{n1}$ is computed
from the next value of the clock~$ck$, the input stream
$\textit{nextv}:\textit{nexts}$ coming from the previously described
multiplexer, and from the previous output value~$\textit{n1l}$.
If the clock~$ck$ is~$\zero$, then the latch simply outputs the negated value of
$\textit{nextv}$ and continues on the remaining streams, setting the
parameter~$\textit{n1l}$ to this value to remember it.
Otherwise, if~$ck$ is~$\one$, then the feedback loop is active and has to be
evaluated until it stabilizes. This is done by the function $\fsym{n1'}$.
It has as arguments the remaining input stream of the clock, the remaining input
stream of the scan multiplexer, and the previous output value and the newly
computed output value. If both of these values are the same, then the value of
the wire $\fsym{n1}$ has stabilized and hence can be output. The tail of the
output stream is computed by again calling the function $\fsym{n1}$ with the
remaining streams for the clock and the scan multiplexer.
Otherwise, the new output value (the last argument of $\fsym{n1'}$) differs from
the old output value (the penultimate argument of $\fsym{n1'}$). In that case,
the new output value becomes the old output value and the new output is
recomputed. This is repeated until eventually the output value stabilizes, or it
will oscillate and never produce a stable output.

Similar to the function~$\fsym{n1}$, the function~$\fsym{n2}$ computes stable
values for the corresponding wire in Figure~\ref{fig:Circuit}. Again, the
parameter~$\textit{n2l}$ is added to store a previously output value, and the
auxiliary function~$\fsym{n2'}$ is used to compute a stable value for the
feedback loop. The only difference to the function~$\fsym{n1}$ is that the cases
of the clock are inverted, due to the additional inverter in
Figure~\ref{fig:Circuit} that feeds the select input of the multiplexer that
computes~$\fsym{n2}$.
Finally, the functions~$\fsym{q}$ and~$\fsym{qn}$ implement the two inverters
that feed the corresponding output signals in Figure~\ref{fig:Circuit}.

The above specification is productive, since the TRS $\R_d \cup \R_s$ can be
proven context-sensitive terminating, for example by the tool
AProVE~\cite{AProVE06}.
Hence, according to Theorem~\ref{thm:CStermination}, the specification is
productive, meaning that every ground term of sort~$s$ rewrites to a constructor
term. This especially holds for the ground terms
$t_{\fsym{q}} = \fsym{q}(t_{\fsym{n2}})$
and
$t_{\fsym{qn}} = \fsym{qn}(t_{\fsym{q}})$,
where
$t_{\fsym{n2}} = \fsym{n2}(
    \fsym{rand},
    \fsym{n1}(
        \fsym{rand},
        \fsym{nexts}(
            \fsym{rand},\fsym{rand},\fsym{rand}
        ),
        \textit{n1l}
    ),
    \textit{n2l}
)$
and the variables $\textit{n1l}$ and $\textit{n2l}$ are instantiated with all
possible combinations of~$\zero$ and~$\one$. Thus, the circuit produces an
infinite stream of stable output values, regardless of its initial state and
input streams, and does not oscillate infinitely long. This illustrates that
productivity analysis can be used to prove stabilization of digital circuits
with arbitrary input sequences, when encoding them as non-orthogonal
proper specifications.

\section{Conclusions and Future Work}
\label{sec:Conclusions}

We have presented a generalization of the productivity checking techniques
in~\cite{ZR10} (including the improvements of~\cite{EH11})
to non-orthogonal specifications, which are able to represent non-deterministic
systems. These naturally arise for example when abstracting away certain details
of an implementation, such as the concrete sequence of input values. This was
used to verify stabilization of hardware descriptions whose environment is left
unspecified, as was demonstrated in Section~\ref{sec:ApplHW}.

Our setting still imposes certain restrictions on the specifications that can be
treated. The most severe restriction is the requirement of left-linear rules in
the structure TRS $\R_s$.
Dropping this requirement however would make Theorem~\ref{thm:CStermination}
unsound.
Similarly, also the requirement that structure arguments of constructors must be
variables cannot be dropped without losing soundness of
Theorem~\ref{thm:CStermination}. This requirement however is not that severe in
practice, since many specifications can be unfolded by introducing fresh
symbols, as was presented in~\cite{EH11,Z09}.

In the future, it would be interesting to investigate whether transformations of
non-orthogonal proper specifications, similar to those in~\cite{ZR10}, can be
defined. It is clear that rewriting of right-hand sides for example is not
productivity-preserving for non-orthogonal specifications, since it only
considers one possible reduction.
However, it would be interesting to investigate whether for example narrowing of
right-hand sides is productivity preserving, as it considers all possible
reductions.

\noindent
\paragraph*{Acknowledgment.} The author would like to thank the anonymous
reviewers for their valuable comments and suggestions that helped to improve
the paper.

\bibliographystyle{eptcs}
\bibliography{ref}

\begin{thebibliography}{10}
\providecommand{\bibitemdeclare}[2]{}
\providecommand{\urlprefix}{Available at }
\providecommand{\url}[1]{\texttt{#1}}
\providecommand{\href}[2]{\texttt{#2}}
\providecommand{\urlalt}[2]{\href{#1}{#2}}
\providecommand{\doi}[1]{doi:\urlalt{http://dx.doi.org/#1}{#1}}
\providecommand{\bibinfo}[2]{#2}

\bibitemdeclare{book}{BN98}
\bibitem{BN98}
\bibinfo{author}{F.~Baader} \& \bibinfo{author}{T.~Nipkow}
  (\bibinfo{year}{1998}): \emph{\bibinfo{title}{{Term Rewriting and All
  That}}}.
\newblock \bibinfo{publisher}{Cambridge University Press}.
\newblock \urlprefix\url{http://dx.doi.org/10.2277/0521779200}.

\bibitemdeclare{phdthesis}{End10}
\bibitem{End10}
\bibinfo{author}{J.~Endrullis} (\bibinfo{year}{2010}):
  \emph{\bibinfo{title}{{Termination and Productivity}}}.
\newblock Ph.D. thesis, \bibinfo{school}{Vrije Universiteit Amsterdam}.
\newblock \urlprefix\url{http://hdl.handle.net/1871/15879}.

\bibitemdeclare{inproceedings}{EGH08}
\bibitem{EGH08}
\bibinfo{author}{J.~Endrullis}, \bibinfo{author}{C.~Grabmayer} \&
  \bibinfo{author}{D.~Hendriks} (\bibinfo{year}{2008}):
  \emph{\bibinfo{title}{{Data-oblivious Stream Productivity}}}.
\newblock In: {\sl \bibinfo{booktitle}{Proceedings of the 11th International
  Conference on Logic for Programming, Artificial Intelligence, and Reasoning
  (LPAR'08)}}, {\sl \bibinfo{series}{Lecture {N}otes in {C}omputer {S}cience}}
  \bibinfo{volume}{5330}, \bibinfo{publisher}{Springer-Verlag}, pp.
  \bibinfo{pages}{79--96}.
\newblock \urlprefix\url{http://dx.doi.org/10.1007/978-3-540-89439-1_6}.
\newblock \bibinfo{note}{Web interface tool:
  \url{http://infinity.few.vu.nl/productivity/}}.

\bibitemdeclare{inproceedings}{EGH09}
\bibitem{EGH09}
\bibinfo{author}{J.~Endrullis}, \bibinfo{author}{C.~Grabmayer} \&
  \bibinfo{author}{D.~Hendriks} (\bibinfo{year}{2009}):
  \emph{\bibinfo{title}{{Complexity of {Fractran} and Productivity}}}.
\newblock In: {\sl \bibinfo{booktitle}{Proceedings of the 22nd Conference on
  Automated Deduction (CADE'09)}}, {\sl \bibinfo{series}{Lecture {N}otes in
  {C}omputer {S}cience}} \bibinfo{volume}{5663},
  \bibinfo{publisher}{Springer-Verlag}, pp. \bibinfo{pages}{371--387}.
\newblock \urlprefix\url{http://dx.doi.org/10.1007/978-3-642-02959-2_28}.

\bibitemdeclare{article}{EH11}
\bibitem{EH11}
\bibinfo{author}{J.~Endrullis} \& \bibinfo{author}{D.~Hendriks}
  (\bibinfo{year}{2011}): \emph{\bibinfo{title}{{Lazy Productivity via
  Termination}}}.
\newblock {\sl \bibinfo{journal}{Theoretical {C}omputer {S}cience}}
  \bibinfo{volume}{412}(\bibinfo{number}{28}), pp. \bibinfo{pages}{3203--3225}.
\newblock \urlprefix\url{http://dx.doi.org/10.1016/j.tcs.2011.03.024}.

\bibitemdeclare{inproceedings}{AProVE06}
\bibitem{AProVE06}
\bibinfo{author}{J.~Giesl}, \bibinfo{author}{P.~Schneider-Kamp} \&
  \bibinfo{author}{R.~Thiemann} (\bibinfo{year}{2006}):
  \emph{\bibinfo{title}{{AProVE} 1.2: {A}utomatic {T}ermination {P}roofs in the
  {D}ependency {P}air {F}ramework}}.
\newblock In: {\sl \bibinfo{booktitle}{Proceedings of the 3rd International
  Joint Conference on Automatic Reasoning (IJCAR'06)}}, {\sl
  \bibinfo{series}{Lecture {N}otes in {C}omputer {S}cience}}
  \bibinfo{volume}{4130}, \bibinfo{publisher}{Springer-Verlag}, pp.
  \bibinfo{pages}{281--286}.
\newblock \urlprefix\url{http://dx.doi.org/10.1007/11814771_24}.
\newblock \bibinfo{note}{Web interface tool:
  \url{http://aprove.informatik.rwth-aachen.de}}.

\bibitemdeclare{article}{L98}
\bibitem{L98}
\bibinfo{author}{S.~Lucas} (\bibinfo{year}{1998}):
  \emph{\bibinfo{title}{{Context-sensitive Computations in Functional and
  Functional Logic Programs}}}.
\newblock {\sl \bibinfo{journal}{Journal of Functional and Logic Programming}}
  \bibinfo{volume}{1998}(\bibinfo{number}{1}).
\newblock
  \urlprefix\url{http://mitpress.mit.edu/e-journals/JFLP/articles/1998/A98-01/A98-01.html}.

\bibitemdeclare{article}{L02}
\bibitem{L02}
\bibinfo{author}{S.~Lucas} (\bibinfo{year}{2002}):
  \emph{\bibinfo{title}{{Context-Sensitive Rewrite Strategies}}}.
\newblock {\sl \bibinfo{journal}{Information and Computation}}
  \bibinfo{volume}{178}(\bibinfo{number}{1}), pp. \bibinfo{pages}{294--343}.
\newblock \urlprefix\url{http://dx.doi.org/10.1006/inco.2002.3176}.

\bibitemdeclare{misc}{OpenCellLibrary08_10_SP1}
\bibitem{OpenCellLibrary08_10_SP1}
\bibinfo{author}{{Nangate Inc.}} (\bibinfo{year}{2008}):
  \emph{\bibinfo{title}{{Open Cell Library v2008\_10 SP1}}}.
\newblock \urlprefix\url{http://www.nangate.com/openlibrary/}.

\bibitemdeclare{book}{Haskell98}
\bibitem{Haskell98}
\bibinfo{author}{S.~{Peyton Jones}} (\bibinfo{year}{2003}):
  \emph{\bibinfo{title}{{{\sf Haskell 98} Language and Libraries: The Revised
  Report}}}.
\newblock \bibinfo{publisher}{Cambridge University Press}.
\newblock \urlprefix\url{http://www.haskell.org/definition}.

\bibitemdeclare{book}{Terese03}
\bibitem{Terese03}
\bibinfo{author}{Terese} (\bibinfo{year}{2003}): \emph{\bibinfo{title}{{Term
  Rewriting Systems}}}.
\newblock {\sl \bibinfo{series}{Cambridge Tracts in Theoretical Computer
  Science}}~\bibinfo{volume}{55}, \bibinfo{publisher}{Cambridge University
  Press}, \bibinfo{address}{Cambridge, UK}.
\newblock \urlprefix\url{http://dx.doi.org/10.2277/0521391156}.

\bibitemdeclare{inproceedings}{Z08}
\bibitem{Z08}
\bibinfo{author}{H.~Zantema} (\bibinfo{year}{2008}):
  \emph{\bibinfo{title}{{Normalization of Infinite Terms}}}.
\newblock In: {\sl \bibinfo{booktitle}{Proceedings of the 19th International
  Conference on Rewriting Techniques and Applications (RTA'08)}}, {\sl
  \bibinfo{series}{Lecture {N}otes in {C}omputer {S}cience}}
  \bibinfo{volume}{5117}, \bibinfo{publisher}{Springer-Verlag}, pp.
  \bibinfo{pages}{441--455}.
\newblock \urlprefix\url{http://dx.doi.org/10.1007/978-3-540-70590-1_30}.

\bibitemdeclare{inproceedings}{Z09}
\bibitem{Z09}
\bibinfo{author}{H.~Zantema} (\bibinfo{year}{2009}):
  \emph{\bibinfo{title}{{Well-definedness of Streams by Termination}}}.
\newblock In: {\sl \bibinfo{booktitle}{Proceedings of the 20th International
  Conference on Rewriting Techniques and Applications (RTA'09)}}, {\sl
  \bibinfo{series}{Lecture {N}otes in {C}omputer {S}cience}}
  \bibinfo{volume}{5595}, \bibinfo{publisher}{Springer-Verlag}, pp.
  \bibinfo{pages}{164--178}.
\newblock \urlprefix\url{http://dx.doi.org/10.1007/978-3-642-02348-4_12}.

\bibitemdeclare{inproceedings}{ZR10}
\bibitem{ZR10}
\bibinfo{author}{H.~Zantema} \& \bibinfo{author}{M.~Raffelsieper}
  (\bibinfo{year}{2010}): \emph{\bibinfo{title}{{Proving Productivity in
  Infinite Data Structures}}}.
\newblock In: {\sl \bibinfo{booktitle}{Proceedings of the 21st International
  Conference on Rewriting Techniques and Applications (RTA'10)}}, {\sl
  \bibinfo{series}{Leibniz International Proceedings in Informatics
  (LIPIcs)}}~\bibinfo{volume}{6}, \bibinfo{publisher}{Schloss
  Dagstuhl--Leibniz-Zentrum f{\"{u}}r Informatik}, pp.
  \bibinfo{pages}{401--416}.
\newblock \urlprefix\url{http://dx.doi.org/10.4230/LIPIcs.RTA.2010.401}.

\bibitemdeclare{inproceedings}{ZR09}
\bibitem{ZR09}
\bibinfo{author}{H.~Zantema} \& \bibinfo{author}{M.~Raffelsieper}
  (\bibinfo{year}{2010}): \emph{\bibinfo{title}{{Stream Productivity by
  Outermost Termination}}}.
\newblock In: {\sl \bibinfo{booktitle}{Proceedings of the 9th International
  Workshop in Reduction Strategies in Rewriting and Programming (WRS'09)}},
  {\sl \bibinfo{series}{{Electronic Proceedings in Theoretical Computer
  Science}}}~\bibinfo{volume}{15}.
\newblock \urlprefix\url{http://dx.doi.org/10.4204/EPTCS.15.7}.

\end{thebibliography}

\end{document}